\definecolor{gelb}{rgb}{1,0.9,0.5}
\definecolor{blau}{rgb}{0,1,1}
\title{The directed metric dimension of directed co-graphs}
\author{Yannick Schmitz \and Egon Wanke}
\institute{Heinrich-Heine-Universit\"at D\"usseldorf, Germany\\ \email{yannick.schmitz@hhu.de, egon.wanke@hhu.de}}
\newtheorem{theorem}{Theorem}
\newtheorem{definition}{Definition}
\newtheorem{lemma}{Lemma}
\newtheorem{example}{Example}
\newtheorem{notation}{Notation}
\authorrunning{Schmitz, Wanke}
\titlerunning{The directed metric dimension of directed co-graphs}
\newcommand{\DP}[3]{
\begin{center}
\begin{tabularx}{.78\textwidth}{lL}
\hline\hline
\multicolumn{2}{c}{\sc{#1}} \\
\hline
\em{Instance:}& #2\\
\em{Question:}& #3 \\ 
\hline\hline
\end{tabularx}
\end{center}
}
\newcommand{\union}{\hbox{\small$\cup$}\,}
\newcommand{\join}{\hbox{$\times$}\,}
\newcommand{\rjoin}{\raise1pt\hbox{\scriptsize$\gg$}\,}
\newcolumntype{L}{>{\raggedright\arraybackslash}X}
\newcolumntype{R}{>{\raggedleft\arraybackslash}X}
\newcolumntype{C}{>{\centering\arraybackslash}X}
\newcommand{\leftsuc}{\rm left}
\newcommand{\rightsuc}{\rm right}
\newcommand{\vertex}[1]{{\rm vertex}(#1)}
\newcommand{\node}[1]{{\rm node}(#1)}
\newcommand{\capfont}[1]{\textsf{\textit{#1}}}
\begin{document}

\maketitle

\begin{abstract}
A vertex $w$ {\em resolves} two vertices $u$ and $v$ in a directed graph $G$ if the distance from $w$ to $u$ is different to the distance from $w$ to $v$. A set of vertices $R$ is a {\em resolving set} for a directed graph $G$ if for every pair of vertices $u, v$ which are not in $R$ there is at least one vertex in $R$ that resolves $u$ and $v$ in $G$. The {\em directed metric dimension} of a directed graph $G$ is the size of a minimum resolving set for $G$. The decision problem {\sc Directed Metric Dimension} for a given directed graph $G$ and a given number $k$ is the question whether $G$ has a resolving set of size at most $k$. In this paper, we study directed co-graphs. We introduce a linear time algorithm for computing a minimum resolving set for directed co-graphs and show that {\sc Directed Metric Dimension} already is NP-complete for directed acyclic graphs.
\end{abstract}

%%%%%%%%%%%%%%%%%%%%%%%%%%%%%%%%%%%%%%%%%%%%%%%%%%%%%%%%%%%%%%%%%%%%%%%%%%%%%%%%
%%%%%%%%%%%%%%%%%%%%%%%%%%%%%%%%%%%%%%%%%%%%%%%%%%%%%%%%%%%%%%%%%%%%%%%%%%%%%%%%
%%%%%%%%%%%%%%%%%%%%%%%%%%%%%%%%%%%%%%%%%%%%%%%%%%%%%%%%%%%%%%%%%%%%%%%%%%%%%%%%

\section{Introduction}

The metric dimension of graphs is originally defined for undirected graphs. In an undirected graph $G$, two vertices $u$ and $v$ are {\em resolved} by a vertex $w$ if the distance between $w$ and $u$ differs from the distance between $w$ and $v$. In directed graphs, two vertices $u$ and $v$ are {\em resolved} by a vertex $w$ if the distance from $w$ to $u$ differs from the distance from $w$ to $v$. In the undirected case as well as in the directed case, a set of vertices $R$ is called a {\em resolving set} for G if for every pair of vertices $u, v$ which are not in $R$ there is at least one vertex $w$ in $R$ resolving $u$ and $v$. The {\em metric dimension} of $G$ is the size of a smallest resolving set for $G$.

The metric dimension of undirected graphs has been introduced in the 1970s independently by Slater \cite{Sla75} and by Harary and Melter \cite{HM76}. It finds applications in various areas, including network discovery and verification \cite{BEEHHMR05}, geographical routing protocols \cite{LA06}, combinatorial optimisation \cite{ST04}, sensor networks \cite{HW12}, robot navigation \cite{KRR96}, and chemistry \cite{CEJO00,Hay17}. 

Deciding whether a given graph $G$ has metric dimension $\leq k$ is NP-complete for undirected and directed graphs, see \cite{GJ79,KRR96}. There are several algorithms for computing a minimum resolving set in polynomial time for special classes of undirected graphs, as for example for trees \cite{CEJO00,KRR96}, wheels \cite{HMPSCP05}, grid graphs \cite{MT84}, $k$-regular bipartite graphs \cite{BBSSS11}, amalgamation of cycles \cite{IBSS10}, and outerplanar graphs \cite{DPSL12}. The approximability of the metric dimension has been studied for bounded degree, dense, and general graphs in \cite{HSV12}. Upper and lower bounds on the metric dimension are considered in \cite{CGH08,CPZ00} for further classes of undirected graphs. In the undirected case, several variants of the metric dimension have been studied, which are usually NP-complete for general graphs as well \cite{OP07,FR18,SVW21}.

A natural way of generalising graph theoretical problems is to consider their directed counterparts. In the context of the metric dimension of graphs, this was first considered by Chartrand, Rains, and Zhang in \cite{CRZK00}, before receiving further consideration in several other papers, see \cite{FGO06,FXW13,Loz13,RRCM14}.

In this paper, we study directed co-graphs \cite{BDGR97} and introduce a linear time algorithm for computing minimum resolving sets for directed co-graphs in linear time. We also show that {\sc Directed Metric Dimension} already is NP-complete for directed acyclic graphs.

%%%%%%%%%%%%%%%%%%%%%%%%%%%%%%%%%%%%%%%%%%%%%%%%%%%%%%%%%%%%%%%%%%%%%%%%%%%%%%%%
%%%%%%%%%%%%%%%%%%%%%%%%%%%%%%%%%%%%%%%%%%%%%%%%%%%%%%%%%%%%%%%%%%%%%%%%%%%%%%%%
%%%%%%%%%%%%%%%%%%%%%%%%%%%%%%%%%%%%%%%%%%%%%%%%%%%%%%%%%%%%%%%%%%%%%%%%%%%%%%%%

\subsection{The directed metric dimension}

All graphs in this paper are finite and simple. For a graph $G=(V,E)$ with vertex set $V$ and edge set $E$, we write $V(G)$ for vertex set $V$ and $E(G)$ for edge set $E$ to reduce the number of variable names and indices when using several graphs.
%Graph $H$ is a {\em subgraph} of graph $G$, denoted by $H\subseteq G$, if $V(H)\subseteq V(G)$ and $E(H) \subseteq E(G)$. It is an {\em induced subgraph} of $G$, if $E(H)$ contains all edges of $E(G)$ whose endpoints are in $V(H)$.
%The subgraph of $G$ induced by the vertices of a set $U \subseteq V(G)$ is denoted by $G[U]$. 
For two vertices $u,v \in V(G)$ the distance $d_G(u,v)$ from $u$ to $v$ is the length (number of edges) of a shortest path from $u$ to $v$. If there is no such path from $u$ to $v$, then $d_G(u,v)$ is not defined. A directed graph $G$ is {\em strongly connected} if for each pair of vertices $u$ and $v$, there is a path from $u$ to $v$ and a path from $v$ to $u$.

The {\em metric dimension} of a directed graph can be defined in the same way as for undirected graphs, see \Cref{figure1} for an example.

%%%%%%%%%%%%%%%%%%%%%%%%%%%%%%%%%%%%%%%%%%%%%%%%%%%%%%%%%%%%
% Definition 1
%%%%%%%%%%%%%%%%%%%%%%%%%%%%%%%%%%%%%%%%%%%%%%%%%%%%%%%%%%%%

\begin{definition}[Directed metric dimension]
\label{definition1}
Two distinct vertices $u$ and $v$ of a directed graph $G$ are {\em resolved} by a vertex $w$ if
\begin{enumerate}
\item $w=u$,
\item $w=v$, or
\item there is a path from $w$ to $u$ and a path from $w$ to $v$ such that $d_G(w,u) \not= d_G(w,v)$.
\end{enumerate}

A set of vertices $R \subseteq V(G)$ is called a {\em resolving set} of $G$ if for every pair of vertices $u, v \in V(G)$ there is at least one vertex $w$ in $R$ resolving $u$ and $v$. The {\em directed metric dimension} of $G$ is the size of a minimum resolving set for $G$.
\end{definition}

Note that it is also possible to consider the distances $d_G(u,w)$ from each vertex to the vertices in $R$ instead of the distances $d_G(w,u)$, but both definitions are equivalent if every edge $(u,v)$ in $G$ is replaced by an edge $(v,u)$. Also note that if $d_G(w,u)$ is undefined, it can not be compared to any other distance $d_G(w,v)$.

The decision problem {\sc Directed Metric Dimension} it defined as follows.

%%%%%%%%%%%%%%%%%%%%%%%%%%%%%%%%%%%%%%%%%%%%%%%%%%%%%%%%%%%%
% Problem 01
%%%%%%%%%%%%%%%%%%%%%%%%%%%%%%%%%%%%%%%%%%%%%%%%%%%%%%%%%%%%

\DP
{Directed Metric Dimension}
{A directed graph $G$ and a number $k$.}
{Is there a resolving set $R \subseteq V(G)$ for $G$ of size at most $k$?}

%%%%%%%%%%%%%%%%%%%%%%%%%%%%%%%%%%%%%%%%%%%%%%%%%%%%%%%%%%%%
% Figure 1
%%%%%%%%%%%%%%%%%%%%%%%%%%%%%%%%%%%%%%%%%%%%%%%%%%%%%%%%%%%%

\begin{figure}[htb]
\center
\includegraphics[width=140pt]{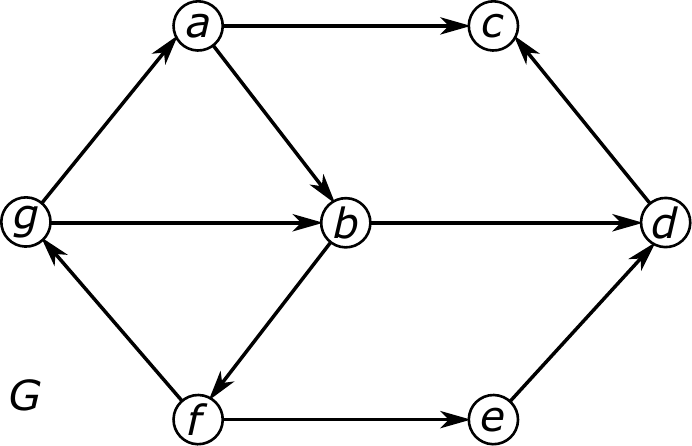}%70mm*2
\caption{Vertex set $\{\capfont{a},\capfont{f},\capfont{g}\}$ is a minimum resolving set for directed graph $G$.
Vertex $\capfont{a}$ resolves the vertex pairs
$\capfont{a},\capfont{b}$,
$\capfont{a},\capfont{c}$,
$\capfont{a},\capfont{d}$,
$\capfont{a},\capfont{e}$,
$\capfont{a},\capfont{f}$,
$\capfont{a},\capfont{g}$,
$\capfont{b},\capfont{d}$,
$\capfont{b},\capfont{e}$,
$\capfont{b},\capfont{f}$,
$\capfont{b},\capfont{g}$,
$\capfont{c},\capfont{d}$,
$\capfont{c},\capfont{e}$,
$\capfont{c},\capfont{f}$,
$\capfont{c},\capfont{g}$,
$\capfont{d},\capfont{e}$,
$\capfont{d},\capfont{g}$,
$\capfont{f},\capfont{e}$, and
$\capfont{f},\capfont{g}$ in $G$,
vertex $\capfont{f}$ resolves the vertex pairs
$\capfont{a},\capfont{c}$,
$\capfont{a},\capfont{e}$,
$\capfont{a},\capfont{f}$,
$\capfont{a},\capfont{g}$,
$\capfont{b},\capfont{c}$,
$\capfont{b},\capfont{e}$,
$\capfont{b},\capfont{f}$,
$\capfont{b},\capfont{g}$,
$\capfont{c},\capfont{d}$,
$\capfont{c},\capfont{e}$,
$\capfont{c},\capfont{f}$,
$\capfont{c},\capfont{g}$,
$\capfont{d},\capfont{e}$,
$\capfont{d},\capfont{f}$,
$\capfont{d},\capfont{g}$,
$\capfont{e},\capfont{f}$, and
$\capfont{f},\capfont{g}$ in $G$, and
vertex $\capfont{g}$ resolves the vertex pairs
$\capfont{a},\capfont{c}$,
$\capfont{a},\capfont{d}$,
$\capfont{a},\capfont{e}$,
$\capfont{a},\capfont{f}$,
$\capfont{a},\capfont{g}$,
$\capfont{b},\capfont{c}$,
$\capfont{b},\capfont{d}$,
$\capfont{b},\capfont{e}$,
$\capfont{b},\capfont{f}$,
$\capfont{b},\capfont{g}$,
$\capfont{c},\capfont{e}$,
$\capfont{c},\capfont{g}$,
$\capfont{d},\capfont{e}$,
$\capfont{d},\capfont{g}$,
$\capfont{e},\capfont{f}$,
$\capfont{e},\capfont{g}$, and
$\capfont{f},\capfont{g}$ in $G$.
}
\label{figure1}
\end{figure}

The following notion of a vertex set $R$ that resolves only the vertex pairs of a vertex set $U \subseteq V(G)$ is frequently used in the next section.

\begin{notation}
Let $U \subseteq V(G)$ be a set of vertices. A set of vertices $R \subseteq U$ {\em resolves a vertex set $U$ in $G$}, or in other words, {\em $R$ is a resolving set for $U$ in $G$}, if for each pair of vertices $u, v \in U$ there is at least one vertex $w$ in $R$ that resolves $u$ and $v$ in $G$.
\end{notation}

If $R$ resolves the vertices of $U$ in $G$, then the necessary shortest paths with different distances $d_G(w,u)$ and $d_G(w,v)$ are paths in graph $G$ and do not need to be paths in the subgraph of $G$ induced by $U$. For example, in \Cref{figure1} vertex set $R = \{\capfont{a}\}$ is a resolving set for $U = \{\capfont{c},\capfont{d},\capfont{g}\}$ in $G$.

%%%%%%%%%%%%%%%%%%%%%%%%%%%%%%%%%%%%%%%%%%%%%%%%%%%%%%%%%%%%%%%%%%%%%%%%%%%%%%%%
%%%%%%%%%%%%%%%%%%%%%%%%%%%%%%%%%%%%%%%%%%%%%%%%%%%%%%%%%%%%%%%%%%%%%%%%%%%%%%%%
%%%%%%%%%%%%%%%%%%%%%%%%%%%%%%%%%%%%%%%%%%%%%%%%%%%%%%%%%%%%%%%%%%%%%%%%%%%%%%%%

\section{Directed metric dimension of directed co-graphs}

In this section, we show how to compute a minimum resolving set for directed co-graphs in linear time.
In \Cref{definition2}, co-graphs and co-trees are defined step by step simultaneously.
We use variable names with a hat symbol for nodes in trees to distinguish them more clearly from vertices in graphs.

%%%%%%%%%%%%%%%%%%%%%%%%%%%%%%%%%%%%%%%%%%%%%%%%%%%%%%%%%%%%
% Definition 2
%%%%%%%%%%%%%%%%%%%%%%%%%%%%%%%%%%%%%%%%%%%%%%%%%%%%%%%%%%%%

\begin{definition}[\bf Directed co-graphs and co-trees]
\label{definition2}
~
\begin{itemize}
\item A directed graph $G$ that consists of a single vertex $u$ is a directed co-graph.

The co-tree $T$ of $G$ consists of a single node $\hat{u}$ associated with vertex $u$ of $G$. Node $\hat{u}$ is the {\em root} of $T$. Let $\vertex{\hat{u}} := u$ and $\node{u} := \hat{u}$. The notation $\vertex{\hat{u}}$ is only defined for leaves $\hat{u}$ of $T$.

\item
If $G_1$ and $G_2$ are two directed co-graphs, then the {\em disjoint union}, {\em join}, or {\em directed join} of $G_1$ and $G_2$, denoted by $G_1\union G_2$, $G_1 \join G_2$, and $G_1 \rjoin G_2$, respectively, is a directed co-graph $G$ with vertex set $V(G_1)\cup V(G_2)$ and edge set
$$
\begin{array}{l}E(G_1)\cup E(G_2), \\
E(G_1)\cup E(G_2)\cup \{(u,v),(v,u)\,\vert\,u\in V(G_1),\,v\in V(G_2)\}, {\text and} \\
E(G_1)\cup E(G_2)\cup \{(u,v)\,\vert\,u\in V(G_1),\,v\in V(G_2)\}, \text{ respectively.}\end{array}
$$
Let $T_1$ and $T_2$ be the co-trees of $G_1$ and $G_2$ with root $\hat{l}$ and $\hat{r}$, respectively. The co-tree $T$ of $G$ is the disjoint union of $T_1$ and $T_2$ with an additional node $\hat{u}$ and two additional edges $\{\hat{u},\hat{l}\}$ and $\{\hat{u},\hat{r}\}$. Node $\hat{u}$ is the {\em root} of $T$ labelled by $\union$, $\join$, and $\rjoin$, respectively.
Node $\hat{l}$ is the {\em left successor node} of $\hat{u}$, also denoted by $\leftsuc(\hat{u})$, and node $\hat{r}$ is the {\em right successor node} of $\hat{u}$, also denoted by $\rightsuc(\hat{u})$. Node $\hat{u}$ is the {\em predecessor node} of $\hat{l}$ and $\hat{r}$.

\end{itemize}

The nodes of $T$ that are not leaves are called {\em inner nodes} of $T$. If $\hat{u}$ is an inner node of $T$ labelled by $\union$ or $\join$ and one of its two successor nodes $\leftsuc(\hat{u})$ and $\rightsuc(\hat{u})$ is a leaf and the other one is not a leaf, then without loss of generality, we assume that successor node $\rightsuc(\hat{u})$ is the leaf.
\end{definition}

\Cref{figure2} shows an example of a directed co-graph and its co-tree. Directed co-graphs can be recognised in linear time \cite{CP06}. This includes the computation of the co-tree. % They can be characterised by 8 forbidden induced subgraphs, see \Cref{figure3}.
In a strongly connected directed co-graph for each pair of vertices $u,v \in V(G)$ the distance $d_G(u,v)$ is at most $2$.

%%%%%%%%%%%%%%%%%%%%%%%%%%%%%%%%%%%%%%%%%%%%%%%%%%%%%%%%%%%%
% Figure 2
%%%%%%%%%%%%%%%%%%%%%%%%%%%%%%%%%%%%%%%%%%%%%%%%%%%%%%%%%%%%

\begin{figure}[htb]
\center
\includegraphics[width=146pt]{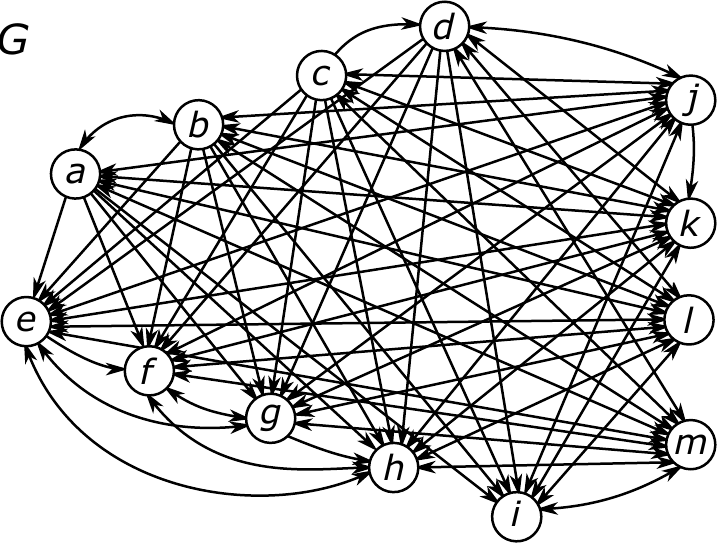}%73*2

\medskip
\includegraphics[width=200pt]{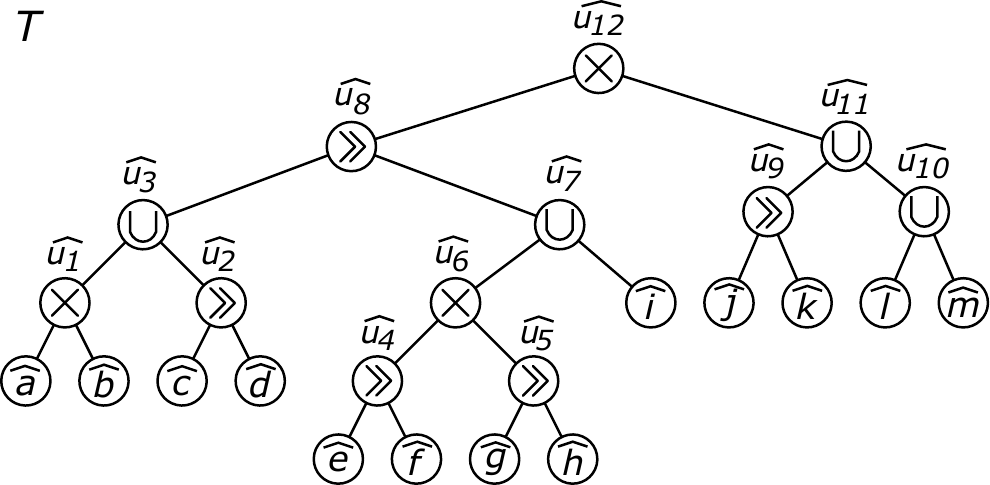}%100*2
\caption{A directed co-graph $G$ and its co-tree $T$.}
\label{figure2}
\end{figure}

%%%%%%%%%%%%%%%%%%%%%%%%%%%%%%%%%%%%%%%%%%%%%%%%%%%%%%%%%%%%
% Figure 3
%%%%%%%%%%%%%%%%%%%%%%%%%%%%%%%%%%%%%%%%%%%%%%%%%%%%%%%%%%%%

%\begin{figure}[htb]
%\center
%\includegraphics[width=290pt]{}%145*2
%\caption{The 8 forbidden induced subgraphs of directed co-graphs.}
%\label{figure3}
%\end{figure}

%%%%%%%%%%%%%%%%%%%%%%%%%%%%%%%%%%%%%%%%%%%%%%%%%%%%%%%%%%%%
% Lemma 4
%%%%%%%%%%%%%%%%%%%%%%%%%%%%%%%%%%%%%%%%%%%%%%%%%%%%%%%%%%%%

For a directed co-graph $G$ with co-tree $T$ and a node $\hat{w} \in V(T)$ let $T_{\hat{w}}$ be the subtree of $T$ rooted at $\hat{w}$ and let $G_{\hat{w}}$ be the subgraph of $G$ induced by the vertices $\vertex{\hat{u}}$ of the leaves $\hat{u}$ of $T_{\hat{w}}$.

\begin{lemma}
\label{lemma4}
Let $G$ be a strongly connected directed co-graph with co-tree $T$ and $\hat{w} \in V(T)$ be an inner node of $T$. There is no vertex $w \in V(G) \setminus V(G_{\hat{w}})$ that resolves two vertices of $V(G_{\hat{w}})$ in $G$.
\end{lemma}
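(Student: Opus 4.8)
The plan is to reduce the resolving condition to an equality of distances and then exploit the rigid edge structure of co-graphs. Fix $w \in V(G) \setminus V(G_{\hat w})$ and two distinct vertices $u,v \in V(G_{\hat w})$. Since $w$ lies outside $G_{\hat w}$, we have $w \ne u$ and $w \ne v$, so cases (1) and (2) of \Cref{definition1} cannot apply; thus $w$ resolves $u$ and $v$ only if $d_G(w,u) \ne d_G(w,v)$. Hence it suffices to prove that $d_G(w,x)$ takes the \emph{same} value for every $x \in V(G_{\hat w})$. Because $G$ is strongly connected and a co-graph, every such distance is defined and at most $2$, and since $x \ne w$ it is at least $1$; thus $d_G(w,x) \in \{1,2\}$, and $d_G(w,x) = 1$ precisely when $(w,x) \in E(G)$.

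Next I would locate the node of $T$ at which $w$ separates from $G_{\hat w}$. Let $\hat a$ be the least common ancestor in $T$ of $\hat w$ and the leaf $\node{w}$. Since $\node{w} \notin V(T_{\hat w})$, the node $\hat a$ is a strict ancestor of $\hat w$ and therefore an inner node; let $\hat c$ be its child on the path to $\hat w$ and $\hat d$ its child on the path to $\node{w}$. Then $V(G_{\hat w}) \subseteq V(G_{\hat c})$, $w \in V(G_{\hat d})$, and $V(G_{\hat c}) \cap V(G_{\hat d}) = \emptyset$.

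The crucial structural fact is that every edge of $G$ joining a vertex of $V(G_{\hat c})$ to a vertex of $V(G_{\hat d})$ is created exactly by the operation labelling $\hat a$: operations strictly below $\hat a$ act within $G_{\hat c}$ or within $G_{\hat d}$, while operations above $\hat a$ add edges only between $V(G_{\hat a})$ and vertices outside it, and both $V(G_{\hat c})$ and $V(G_{\hat d})$ lie inside $V(G_{\hat a})$. Consequently the adjacency from $w$ into $V(G_{\hat w})$ is completely uniform, and a short case analysis on the label of $\hat a$ finishes the argument. If $\hat a$ is labelled $\union$, there is no edge from $w$ to any $x \in V(G_{\hat w})$. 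If $\hat a$ is labelled $\join$, there is an edge $(w,x)$ for every such $x$. If $\hat a$ is labelled $\rjoin$, the outcome depends on orientation: if $\hat d = \leftsuc(\hat a)$ then all edges $(w,x)$ are present, whereas if $\hat d = \rightsuc(\hat a)$ then no edge $(w,x)$ exists. In each case either $(w,x) \in E(G)$ for all $x \in V(G_{\hat w})$, giving $d_G(w,x) = 1$ throughout, or $(w,x) \notin E(G)$ for all such $x$, giving $d_G(w,x) = 2$ throughout by the distance bound together with strong connectivity. Either way $d_G(w,u) = d_G(w,v)$, so $w$ does not resolve $u$ and $v$.

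I expect the main obstacle to be the edge-localisation claim: carefully arguing that no $V(G_{\hat c})$--$V(G_{\hat d})$ edge can be introduced by operations above or below $\hat a$, since this is exactly what rules out a ``shortcut'' edge from $w$ directly into some part of $G_{\hat w}$ that would otherwise break the uniformity of adjacency. Once uniformity of adjacency is established, converting it into uniformity of distance only uses the already-noted fact that distances in a strongly connected directed co-graph are bounded by $2$.
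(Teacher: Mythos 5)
Your proof is correct and takes essentially the same route as the paper's: the paper's one-line observation that ``in each further composition all vertices of $V(G_{\hat w})$ get the same additional connections to outside vertices'' is exactly your uniform-adjacency claim, which you make precise by locating the least common ancestor $\hat a$ of $\hat w$ and $\node{w}$ and doing the case analysis on its label, and both arguments close identically via the bound $d_G(u,v) \le 2$ for strongly connected directed co-graphs. Your write-up (including the correct orientation check for $\rjoin$ and the edge-localisation argument at $\hat a$) is just a rigorous elaboration of the paper's compressed proof, with no substantive difference in method.
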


\begin{proof}
Since in each further composition of $G_{\hat{w}}$ all vertices of $V(G_{\hat{w}})$ get the same additional connections to vertices of $V(G) \setminus V(G_{\hat{w}})$, it follows that for each vertex $w \in V(G) \setminus V(G_{\hat{w}})$ and each vertex pair $v_1,v_2 \in V(G_{\hat{w}})$ either $d_G(w,v_1) = d_G(w,v_2) = 1$ or  $d_G(w,v_1) = d_G(w,v_2) = 2$. Note that $G$ is strongly connected and thus $d_G(u,v) \leq 2$ for each vertex pair $u,v \in V(G)$.
\end{proof}

If $R$ is a resolving set for $G$, then by \Cref{lemma4} $R \cap V(G_{\hat{w}})$ is a resolving set for $V(G_{\hat{w}})$ in $G$. Note that $R \cap V(G_{\hat{w}})$ does not need to be a resolving set for $G_{\hat{w}}$, because $G_{\hat{w}}$ does not need to be strongly connected. For example, suppose $G_1$, $G_2$, $G_3$, and $G_4$ are four graphs with exactly one vertex $u_1$, $u_2$, $u_3$, and $u_4$, respectively. Then $\{u_1\}$ is a resolving set for $\{u_1,u_2,u_3\}$ in $G=((G_1 \rjoin G_2) \union G_3) \join G_4$, but not a resolving set for $G'=(G_1 \rjoin G_2) \union G_3$, because there is no path from $u_1$ to $u_3$ in $G'$.

Our analysis of directed co-graphs requires the distinction between the following two different vertex types.

%%%%%%%%%%%%%%%%%%%%%%%%%%%%%%%%%%%%%%%%%%%%%%%%%%%%%%%%%%%%
% Definition 3
%%%%%%%%%%%%%%%%%%%%%%%%%%%%%%%%%%%%%%%%%%%%%%%%%%%%%%%%%%%%

\begin{definition}
\label{definition3}
Let $G$ be a directed graph, $u,v \in V(G)$ be two distinct vertices of $G$, and $R \subseteq V(G)$ be a non-empty set of vertices.
A vertex $u \in V(G) \setminus \{R\}$ is called a {\em distance 1 vertex} (or 1-vertex for short) w.r.t.\ $R$ if $\forall w \in R: (w,u) \in E(G)$. It is called a {\em distance 2 vertex} (or 2-vertex for short) w.r.t.\ $R$ if $\forall w \in R: (w,u) \not\in E(G)$.
\end{definition}

\Cref{figure4} shows an example of a 1-vertex and a 2-vertex w.r.t.\ a vertex set $R$.

%%%%%%%%%%%%%%%%%%%%%%%%%%%%%%%%%%%%%%%%%%%%%%%%%%%%%%%%%%%%
% Figure 4
%%%%%%%%%%%%%%%%%%%%%%%%%%%%%%%%%%%%%%%%%%%%%%%%%%%%%%%%%%%%

\begin{figure}[htb]
\center
\includegraphics[width=264pt]{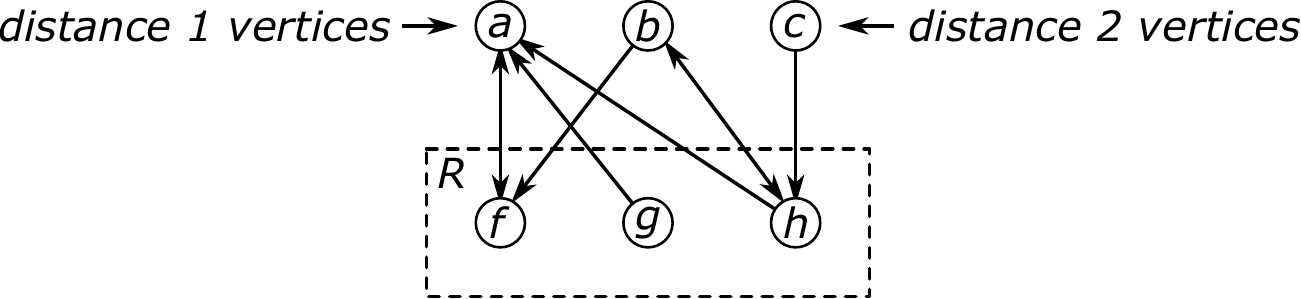}%132*2.5
\caption{Vertex \capfont{a} is a 1-vertices w.r.t.\ $R=\{\capfont{f},\capfont{g},\capfont{h}\}$, vertex \capfont{c} is a 2-vertex w.r.t.\ $R$, and vertex \capfont{b} is neither a 1-vertex nor a 2-vertex w.r.t.\ $R$.}
\label{figure4}
\end{figure}

If a vertex $u \in V(G) \setminus R$ is a 1-vertex or 2-vertex w.r.t.\ vertex set $R \subseteq V(G)$, then $u$ obviously is a 1-vertex or 2-vertex, respectively, w.r.t.\ each non-empty subset $R'$ of $R$. If $R$ is a resolving set for $G$, then there is at most one 1-vertex w.r.t.\ $R$ and at most one 2-vertex w.r.t.\ $R$, because two 1-vertices and two 2-vertices $u,v$  w.r.t.\ $R$ have the same distance $d_G(w,u_1) = d_G(w,u_2) = 1$ and $d_G(w,u_1) = d_G(w,u_2) = 2$, respectively, from each vertex $w \in R$.

The algorithm for computing a minimum resolving set for a strongly connected directed co-graph $G$ analyses the co-tree $T$ for $G$ bottom-up, see procedure {\bf BottomUp}$(\hat{w})$. For each inner node $\hat{w}$ of $T$, it computes one or two minimum resolving sets for $V(G_{\hat{w}})$ in $G$. The resolving sets for node $\hat{w}$ are build by the union of a resolving set $R_{\hat{l}}$ for $V(G_{\hat{l}})$ in $G$ and a resolving set $R_{\hat{r}}$ for $V(G_{\hat{r}})$ in $G$, where $\hat{l}$ and $\hat{r}$ are the left and right successor nodes of $\hat{w}$ in $T$.

\begin{algorithm}[ht]
\SetKwProg{Fn}{BottomUp}{$(\hat{w})$}{}
\Fn{}{
	\If{$(\hat{w} \text{ is an inner node of } T)$}{
		$\hat{l} \leftarrow$ left successor node of $\hat{w}$; \\
		{\bf BottomUp}$(\hat{l})$; \\
		$\hat{r} \leftarrow$ right successor node of $\hat{w}$; \\
		{\bf BottomUp}$(\hat{r})$; \\
		{\bf Merge}$(\hat{w},\hat{l},\hat{r})$;
	}
}
\end{algorithm}

The next lemma specifies the conditions under which the minimum resolving sets for $V(G_{\hat{l}})$ and $V(G_{\hat{r}})$ in $G$ only need to be joined to get a resolving set for $V(G_{\hat{w}})$ in $G$.

%%%%%%%%%%%%%%%%%%%%%%%%%%%%%%%%%%%%%%%%%%%%%%%%%%%%%%%%%%%%
% Lemma 6
%%%%%%%%%%%%%%%%%%%%%%%%%%%%%%%%%%%%%%%%%%%%%%%%%%%%%%%%%%%%

\begin{lemma}
\label{lemma6}
Let $G$ be a strongly connected directed co-graph with co-tree $T$, $\hat{w} \in V(T)$ be an inner node of $T$, $\hat{l}= \leftsuc(\hat{w})$ be the left successor of $\hat{w}$ and $\hat{r}=\rightsuc(\hat{w})$ be the right successor of $\hat{w}$.
If $\hat{l}$ is an inner node of $T$, then let $R_{\hat{l}} \subseteq V(G_{\hat{l}})$ be a resolving set for $V(G_{\hat{l}})$ in $G$.
If $\hat{r}$ is an inner node of $T$, then let $R_{\hat{r}} \subseteq V(G_{\hat{r}})$ be a resolving set for $V(G_{\hat{r}})$ in $G$.

We distinguish between the three labels $\union$, $\join$ and $\rjoin$ for node $\hat{w}$ and the cases where the successor nodes of $\hat{w}$ are inner nodes or leaves.

\begin{enumerate}

\item
% Lemma 6 Case 1 %%%%%%%%%%%%%%%%%%%%%%%%%%%%%%%%%%%%%%%%%%%%%%%%%%%
Let $\hat{w}$ be labelled by $\union$.

\begin{enumerate}
\item
% Lemma 6 Case 1.a %%%%%%%%%%%%%%%%%%%%%%%%%%%%%%%%%%%%%%%%%%%%%%%%%%%
Let $\hat{l}$ and $\hat{r}$ be inner nodes of $T$.

$R_{\hat{l}} \cup R_{\hat{r}}$ is a minimum resolving set for $V(G_{\hat{w}})$ in $G$ iff $R_{\hat{l}}$ is a minimum resolving set for $V(G_{\hat{l}})$ in $G$, $R_{\hat{r}}$ is a minimum resolving set for $V(G_{\hat{r}})$ in $G$, and $G_{\hat{l}}$ has no 2-vertex w.r.t.\ $R_{\hat{l}}$ or $G_{\hat{r}}$ has no 2-vertex w.r.t.\ $R_{\hat{r}}$.

\item
% Lemma 6 Case 1.b %%%%%%%%%%%%%%%%%%%%%%%%%%%%%%%%%%%%%%%%%%%%%%%%%%%
Let $\hat{l}$ be an inner node of $T$ and $\hat{r}$ be a leaf of $T$.

$R_{\hat{l}}$ is a minimum resolving set for $V(G_{\hat{w}})$ in $G$ iff $R_{\hat{l}}$ is a minimum resolving set for $V(G_{\hat{l}})$ in $G$ and $G_{\hat{l}}$ has no 2-vertex w.r.t.\ $R_{\hat{l}}$.

\end{enumerate}

\item
% Lemma 6 Case 2 %%%%%%%%%%%%%%%%%%%%%%%%%%%%%%%%%%%%%%%%%%%%%%%%%%%
Let $\hat{w}$ be labelled by $\join$.

\begin{enumerate}
\item
% Lemma 6 Case 2.a %%%%%%%%%%%%%%%%%%%%%%%%%%%%%%%%%%%%%%%%%%%%%%%%%%%
Let $\hat{l}$ and $\hat{r}$ be inner nodes of $T$.

$R_{\hat{l}} \cup R_{\hat{r}}$ is a minimum resolving set for $V(G_{\hat{w}})$ in $G$ iff $R_{\hat{l}}$ is a minimum resolving set for $V(G_{\hat{l}})$ in $G$, $R_{\hat{r}}$ is a minimum resolving set for $V(G_{\hat{r}})$ in $G$, and $G_{\hat{l}}$ has no 1-vertex w.r.t.\ $R_{\hat{l}}$ or $G_{\hat{r}}$ has no 1-vertex w.r.t.\ $R_{\hat{r}}$.

\item
% Lemma 6 Case 2.b %%%%%%%%%%%%%%%%%%%%%%%%%%%%%%%%%%%%%%%%%%%%%%%%%%%
Let $\hat{l}$ be an inner node of $T$ and $\hat{r}$ be a leaf of $T$.

$R_{\hat{l}}$ is a minimum resolving set for $V(G_{\hat{w}})$ in $G$ iff
$R_{\hat{l}}$ is a minimum resolving set for $V(G_{\hat{l}})$ in $G$ and
$G_{\hat{l}}$ has no 1-vertex w.r.t.\ $R_{\hat{l}}$.

\end{enumerate}

\item
% Lemma 6 Case 3 %%%%%%%%%%%%%%%%%%%%%%%%%%%%%%%%%%%%%%%%%%%%%%%%%%%
Let $\hat{w}$ be labelled by $\rjoin$.

\begin{enumerate}
\item
% Lemma 6 Case 3.a %%%%%%%%%%%%%%%%%%%%%%%%%%%%%%%%%%%%%%%%%%%%%%%%%%%
Let $\hat{l}$ and $\hat{r}$ be inner nodes of $T$.

$R_{\hat{l}} \cup R_{\hat{r}}$ is a minimum resolving set for $V(G_{\hat{w}})$ in $G$ iff $R_{\hat{l}}$ is a minimum resolving set for $V(G_{\hat{l}})$ in $G$, $R_{\hat{r}}$ is a minimum resolving set for $V(G_{\hat{r}})$ in $G$, and $G_{\hat{l}}$ has no 1-vertex w.r.t.\ $R_{\hat{l}}$ or $G_{\hat{r}}$ has no 2-vertex w.r.t.\ $R_{\hat{r}}$.

\item
% Lemma 6 Case 3.b %%%%%%%%%%%%%%%%%%%%%%%%%%%%%%%%%%%%%%%%%%%%%%%%%%%
Let $\hat{l}$ be an inner node of $T$ and $\hat{r}$ be a leaf of $T$.

$R_{\hat{l}}$ is a minimum resolving set for $V(G_{\hat{w}})$ in $G$ iff $R_{\hat{l}}$ is a minimum resolving set for $V(G_{\hat{l}})$ in $G$, and $G_{\hat{l}}$ has no 1-vertex w.r.t.\ $R_{\hat{l}}$. 

\item
% Lemma 6 Case 3.c %%%%%%%%%%%%%%%%%%%%%%%%%%%%%%%%%%%%%%%%%%%%%%%%%%%
Let $\hat{l}$ be a leaf of $T$ and $\hat{r}$ be an inner node of $T$.

$R_{\hat{r}}$ is a minimum resolving set for $V(G_{\hat{w}})$ in $G$ iff $R_{\hat{r}}$ is a minimum resolving set for $V(G_{\hat{r}})$ in $G$, and $G_{\hat{r}}$ has no 2-vertex w.r.t.\ $R_{\hat{r}}$. 

\end{enumerate}

\end{enumerate}
\end{lemma}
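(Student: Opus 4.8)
The plan is to reduce every case to the behaviour of \emph{cross pairs}, i.e.\ pairs $u \in V(G_{\hat{l}})$, $v \in V(G_{\hat{r}})$, and to read the stated $1$-/$2$-vertex conditions directly off the distances that the operation at $\hat{w}$ creates. First I would record two structural facts that make these distances trivial to compute. For any $a,b \in V(G_{\hat{w}})$ the edge $(a,b)$ lies in $E(G)$ iff it already lies in $E(G_{\hat{w}})$, since, as in the proof of \Cref{lemma4}, every composition above $\hat{w}$ only adds edges between $V(G_{\hat{w}})$ and $V(G)\setminus V(G_{\hat{w}})$; because $G$ is strongly connected and hence has diameter at most $2$, this gives $d_G(a,b)=1$ when $(a,b)\in E(G_{\hat{w}})$ and $d_G(a,b)=2$ otherwise. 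In particular distances inside a single part are inherited from that part, so by \Cref{lemma4} the pairs inside $V(G_{\hat{l}})$ are resolved exactly by $R_{\hat{l}}$ and those inside $V(G_{\hat{r}})$ exactly by $R_{\hat{r}}$, and the only genuinely new work is the cross pairs.

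For the cross pairs I would compute, for a resolver $w$ in one part, its distances to $u$ and $v$. A vertex $w\in R_{\hat{l}}\setminus\{u\}$ has $d_G(w,u)\in\{1,2\}$ inherited from $G_{\hat{l}}$, while $d_G(w,v)$ is forced by the label of $\hat{w}$: it equals $2$ for $\union$ and $1$ for both $\join$ and $\rjoin$. Symmetrically $w\in R_{\hat{r}}\setminus\{v\}$ has $d_G(w,v)$ inherited and $d_G(w,u)$ equal to $2$ for $\union$, $1$ for $\join$, and $2$ for $\rjoin$, the last because the directed join adds no edge from the right part to the left part. Feeding these into \Cref{definition1} tells me exactly when $w$ resolves $(u,v)$, hence when \emph{no} $w\in R_{\hat{l}}\cup R_{\hat{r}}$ does: for $\union$ precisely when $u$ is a $2$-vertex w.r.t.\ $R_{\hat{l}}$ and $v$ is a $2$-vertex w.r.t.\ $R_{\hat{r}}$; for $\join$ when $u$ and $v$ are $1$-vertices w.r.t.\ the respective sets; and for $\rjoin$ when $u$ is a $1$-vertex w.r.t.\ $R_{\hat{l}}$ and $v$ is a $2$-vertex w.r.t.\ $R_{\hat{r}}$. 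Negating these is exactly the disjunction appearing in cases~1(a), 2(a) and 3(a).

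I would treat the leaf cases as the degenerate instance of the same computation in which one part is a single vertex $z$. Here $z$ is itself pushed into the extremal role: it is the unique $1$-vertex of $V(G_{\hat{w}})$ under $\join$ and under a directed join pointing \emph{into} $z$ (cases~2(b) and 3(b)), and the unique $2$-vertex under $\union$ and under a directed join pointing \emph{out of} $z$ (cases~1(b) and 3(c)). Resolving every pair $(u,z)$ then amounts to demanding that the non-trivial part contain no vertex of the same type as $z$, which is precisely the ``no $1$-vertex'' resp.\ ``no $2$-vertex'' clause stated there. This also explains the shape of the case list: the symmetric labels $\union$ and $\join$ need only the right-leaf case by the normalisation fixed in \Cref{definition2}, whereas the asymmetric $\rjoin$ needs both orientations, giving cases~3(b) and 3(c).

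It remains to combine resolvability with minimality, and this is where I expect the real difficulty. The lower-bound half is immediate from \Cref{lemma4}: every resolving set $R$ of $V(G_{\hat{w}})$ splits into $R\cap V(G_{\hat{l}})$ and $R\cap V(G_{\hat{r}})$, which resolve $V(G_{\hat{l}})$ and $V(G_{\hat{r}})$ respectively, so $|R|$ is at least the sum of the two part-wise minima (respectively at least the minimum of the non-trivial part in the leaf cases). Hence, whenever $R_{\hat{l}}$ and $R_{\hat{r}}$ are minimum and the relevant disjunction holds, the union resolves all pairs and meets this bound, so it is minimum; this settles the backward direction of every case. For the forward direction a minimum union is in particular resolving, so the cross-pair analysis forces the disjunction. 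The remaining, and I expect hardest, point is to argue that the two components are themselves minimum. The usable half is a replacement argument: the disjunction always exhibits one part---say the one free of its extremal vertex---whose set already resolves \emph{every} cross pair, so the other part's set may be swapped for a minimum resolving set of that part without losing resolvability, which forbids it from being non-minimum. The genuinely delicate issue, and the one the case analysis is built to control, concerns the part we rely on: insisting on ``no $2$-vertex'' (or ``no $1$-vertex'') can make a part's smallest admissible set strictly larger than its unconstrained minimum, so one cannot simply shrink that part and the size bookkeeping must be carried out with this possible enlargement in mind. Managing this interaction---deciding at each node whether the join of two unconstrained minima already suffices or whether one part must be enlarged to avoid its extremal vertex---is exactly what makes the forward direction nontrivial and what the accompanying Merge procedure is designed to handle.
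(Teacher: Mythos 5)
Your proposal follows essentially the same route as the paper's proof: reduce everything to cross pairs via \Cref{lemma4}, read off the label-dependent cross distances ($2/2$ for $\union$, $1/1$ for $\join$, $1/2$ for $\rjoin$), characterise the unresolved cross pairs exactly by the stated 1-/2-vertex conditions (with the leaf vertex playing the forced extremal role in the degenerate cases), and get minimality from the part-wise lower bound that \Cref{lemma4} provides. The forward-direction subtlety you flag at the end --- that minimality of the union does not obviously force minimality of the component whose extremal-vertex constraint is being relied upon, since the constrained minimum may exceed the unconstrained one --- is a genuine issue, but the paper's own proof does not address it either: it establishes only the backward implication and the necessity of the disjunction, so your write-up matches the published argument and is, if anything, more candid about where it is incomplete.
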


\begin{proof}
Since two distinct vertices from $V(G_{\hat{l}}) \setminus R_{\hat{l}}$ as well as two distinct vertices from $V(G_{\hat{r}}) \setminus R_{\hat{r}}$ are resolved in $G$ by a vertex of $R_{\hat{r}}$ and a vertex of $R_{\hat{l}}$, respectively, we only need to consider the case in which one vertex is from $V(G_{\hat{l}}) \setminus R_{\hat{l}}$ and the other vertex is from $V(G_{\hat{r}}) \setminus R_{\hat{r}}$.

\begin{enumerate}
\item
% Proof Lemma 6 Case 1 %%%%%%%%%%%%%%%%%%%%%%%%%%%%%%%%%%%%%%%%%%%%%%%%%%%
Let $\hat{w}$ be labelled by $\union$.

In this case, for each vertex $u_l \in V(G_{\hat{l}}) \setminus R_{\hat{l}}$ and for each vertex $u_r \in V(G_{\hat{r}}) \setminus R_{\hat{r}}$, or $u_r=\vertex{\hat{r}}$ if $\hat{r}$ is a leaf in $T$, we have $d_G(u_l,u_r) = d_G(u_r,u_l) =2$.

\begin{enumerate}
\item
% Proof Lemma 6 Case 1.a %%%%%%%%%%%%%%%%%%%%%%%%%%%%%%%%%%%%%%%%%%%%%%%%%%%
Let $\hat{l}$ and $\hat{r}$ be inner nodes of $T$.

\smallskip
If $u_l$ is not a 2-vertex w.r.t.\ $R_{\hat{l}}$, then there is vertex $v_l \in R_{\hat{l}}$ such that $d_G(v_l,u_l) = 1$.
If $u_r$ is not a 2-vertex w.r.t.\ $R_{\hat{r}}$, then there is a vertex $v_r \in R_{\hat{r}}$ such that $d_G(v_r,u_r) = 1$.
Since $d_G(v_l,u_r) = 2$ and $d_G(v_r,u_l) = 2$, vertex $u_l$ and vertex $u_r$ are resolved in $G$ by $v_l$ or $v_r$.
Thus $R_{\hat{l}} \cup R_{\hat{r}}$ is a resolving set for $V(G_{\hat{w}})$ in $G$.
By \Cref{lemma4}, it is a minimum resolving set for $V(G_{\hat{w}})$ in $G$ if $R_{\hat{l}}$ is a minimum resolving set for $V(G_{\hat{l}})$ in $G$ and $R_{\hat{r}}$ is a minimum resolving set for $V(G_{\hat{l}})$ in $G$.

\smallskip
If $u_l$ is a 2-vertex w.r.t.\ $R_{\hat{l}}$ and $u_r$ is a 2-vertex w.r.t.\ $R_{\hat{r}}$, then for each vertex $v \in R_{\hat{l}} \cup R_{\hat{r}}$, $d_G(v,u_l) = d_G(v,u_r) = 2$.
In this case, $u_l$ and $u_r$ are not resolved by any vertex of $R_{\hat{l}} \cup R_{\hat{r}}$ in $G$ and thus $R_{\hat{l}} \cup R_{\hat{r}}$ is not a resolving set for $V(G_{\hat{w}})$ in $G$.

\item
% Proof Lemma 6 Case 1.b %%%%%%%%%%%%%%%%%%%%%%%%%%%%%%%%%%%%%%%%%%%%%%%%%%%
Let $\hat{l}$ be an inner node of $T$ and $\hat{r}$ be a leaf of $T$.

\smallskip
If $u_l$ is not a 2-vertex w.r.t.\ $R_{\hat{l}}$, then there is a vertex $v_l \in R_{\hat{l}}$ such that $d_G(v_l,u_l) = 1$. Since $d_G(v_l,u_r) = 2$ , vertex $u_l$ and vertex $u_r$ are resolved in $G$ by $v_l$.
Thus $R_{\hat{l}}$ is a resolving set for $V(G_{\hat{w}})$ in $G$.
By \Cref{lemma4}, it is a minimum resolving set for $V(G_{\hat{w}})$ in $G$ if $R_{\hat{l}}$ is a minimum resolving set for $V(G_{\hat{l}})$ in $G$.

\smallskip
If $u_l$ is a 2-vertex w.r.t.\ $R_{\hat{l}}$, then for each vertex $v_l \in R_{\hat{l}}$, $d_G(v_l,u_l) = 2$.
In this case, vertex $u_l$ and vertex $u_r$ are not resolved by any vertex of $R_{\hat{l}}$ and thus $R_{\hat{l}}$ is not a resolving set for $V(G_{\hat{w}})$ in $G$.

\end{enumerate}

\item
% Proof Lemma 6 Case 2 %%%%%%%%%%%%%%%%%%%%%%%%%%%%%%%%%%%%%%%%%%%%%%%%%%%
Let $\hat{w}$ be labelled by $\join$.

The proof is running analogously to the proof for label $\union$.

\item
% Proof Lemma 6 Case 3 %%%%%%%%%%%%%%%%%%%%%%%%%%%%%%%%%%%%%%%%%%%%%%%%%%%
Let $\hat{w}$ be labelled by $\rjoin$.

In this case, for each vertex $u_l \in V(G_{\hat{l}})$, or $u_l=\vertex{\hat{l}}$ if $\hat{l}$ is a leaf in $T$, and for each vertex $u_r \in V(G_{\hat{r}})$, or $u_r=\vertex{\hat{r}}$ if $\hat{r}$ is a leaf in $T$, we have $d_G(u_l,u_r) = 1$ and $d_G(u_r,u_l) =2$.

\begin{enumerate}
\item
% Proof Lemma 6 Case 3.a %%%%%%%%%%%%%%%%%%%%%%%%%%%%%%%%%%%%%%%%%%%%%%%%%%%
Let $\hat{l}$ and $\hat{r}$ be inner nodes of $T$.

\smallskip
If $u_l$ is not a 1-vertex w.r.t.\ $R_{\hat{l}}$, then there is a vertex $v_l \in R_{\hat{l}}$ such that $d_G(v_l,u_l)=2$.
If $u_r$ is not a 2-vertex w.r.t.\ $R_{\hat{r}}$, then there is a vertex $v_r \in R_{\hat{r}}$ such that $d_G(v_r,u_r)=1$.
Since $d_G(v_l,u_r) = 1$ and $d_G(v_r,u_l) = 2$, vertex $u_l$ and vertex $u_r$ are resolved in $G$ by $v_l$ or $v_r$.
Thus $R_{\hat{l}} \cup R_{\hat{r}}$ is a resolving set for $V(G_{\hat{w}})$ in $G$.
By \Cref{lemma4}, it is a minimum resolving set for $V(G_{\hat{w}})$ in $G$ if $R_{\hat{l}}$ is a minimum resolving set for $V(G_{\hat{l}})$ in $G$ and $R_{\hat{r}}$ is a minimum resolving set for $V(G_{\hat{l}})$ in $G$.

\smallskip
If $u_l$ is a 1-vertex w.r.t.\ $R_{\hat{l}}$ and $u_r$ is a 2-vertex w.r.t.\ $R_{\hat{r}}$, then for every $v \in R_{\hat{l}}$, $d_G(v,u_l) = d_G(v,u_r) = 1$ and for every $v \in R_{\hat{r}}$, $d_G(v,u_l) = d_G(v,u_r) = 2$. In this case $u_l$ and $u_r$ are not resolved by any vertex of $R_{\hat{l}} \cup R_{\hat{r}}$ and thus $R_{\hat{l}} \cup R_{\hat{r}}$ is not a resolving set for $V(G_{\hat{w}})$ in $G$.

\item
% Proof Lemma 6 Case 3.b %%%%%%%%%%%%%%%%%%%%%%%%%%%%%%%%%%%%%%%%%%%%%%%%%%%
Let $\hat{l}$ be an inner node of $T$ and $\hat{r}$ be a leaf of $T$.

\smallskip
If $u_l$ is not a 1-vertex w.r.t.\ $R_{\hat{l}}$, then there is a vertex $v_l \in R_{\hat{l}}$ such that $d_G(v_l,u_l) = 2$. Since $d_G(v_l,u_r) = 1$, vertex $u_l$ and vertex $u_r$ are resolved in $G$ by $v_l$.
Thus $R_{\hat{l}}$ is a resolving set for $V(G_{\hat{w}})$ in $G$.
By \Cref{lemma4}, it is a minimum resolving set for $V(G_{\hat{w}})$ in $G$ if $R_{\hat{l}}$ is a minimum resolving set for $V(G_{\hat{l}})$ in $G$.

\smallskip
If $u_l$ is a 1-vertex w.r.t.\ $R_{\hat{l}}$, then for each vertex $v_l \in R_{\hat{l}}$, $d_G(v_l,u_l) = 1$.
In this case, vertex $u_l$ and vertex $u_r$ are not resolved by any vertex of $R_{\hat{l}}$ and thus $R_{\hat{l}}$ is not a resolving set for $V(G_{\hat{w}})$ in $G$.

\item
% Proof Lemma 6 Case 3.c %%%%%%%%%%%%%%%%%%%%%%%%%%%%%%%%%%%%%%%%%%%%%%%%%%%
Let $\hat{l}$ be a leaf of $T$ and $\hat{r}$ be an inner node of $T$.

\smallskip
If $u_r$ is not a 2-vertex w.r.t.\ $R_{\hat{r}}$, then there is a vertex $v_r \in R_{\hat{r}}$ such that $d_G(v_r,u_r) = 1$. Since $d_G(v_r,u_l) = 2$, vertex $u_l$ and vertex $u_r$ are resolved in $G$ by $v_r$.
Thus $R_{\hat{r}}$ is a resolving set for $V(G_{\hat{w}})$ in $G$.
By \Cref{lemma4}, it is a minimum resolving set for $V(G_{\hat{w}})$ in $G$ if $R_{\hat{r}}$ is a minimum resolving set for $V(G_{\hat{r}})$ in $G$.

\smallskip
If $u_r$ is a 2-vertex w.r.t.\ $R_{\hat{r}}$, then for each vertex $v_r \in R_{\hat{r}}$, $d_G(v_r,u_r) = 2$.
In this case, vertex $u_r$ and vertex $u_l$ are not resolved by any vertex of $R_{\hat{r}}$ and thus $R_{\hat{r}}$ is not a resolving set for $V(G_{\hat{w}})$ in $G$.

\end{enumerate}

\end{enumerate}

\end{proof}

The proof of \Cref{lemma6} shows the following observation.
\begin{enumerate}
\item
Suppose $\hat{l}$ and $\hat{r}$ are inner node of $T$. If $R_{\hat{l}} \cup R_{\hat{r}}$ is not a resolving set for $V(G_{\hat{w}})$ in $G$, then it is sufficient to add one additional vertex $u$ of $G_{\hat{w}}$ to $R_{\hat{l}} \cup R_{\hat{r}}$ to get a resolving set for $V(G_{\hat{w}})$ in $G$. This is possible with a vertex of $G_{\hat{l}}$ and with a vertex from $G_{\hat{r}}$. If $\hat{w}$ is labelled by $\union$ ($\join$, $\rjoin$), then we can use the 2-vertex (1-vertex, 1-vertex) of $G_{\hat{l}}$ and the (2-vertex, 1-vertex, 2-vertex) of $G_{\hat{r}}$.
\item
Suppose $\hat{l}$ is an inner node and $\hat{r}$ is a leaf of $T$. If $R_{\hat{l}}$ is not a resolving set for $V(G_{\hat{w}})$ in $G$, then it is also sufficient to add one additional vertex $u$ of $G_{\hat{w}}$ to $R_{\hat{l}}$ to get a resolving set for $V(G_{\hat{w}})$ in $G$. If $\hat{w}$ is labelled by $\union$ ($\join$, $\rjoin$), then we can use the 2-vertex (1-vertex, 1-vertex) of $G_{\hat{l}}$ or vertex $\vertex{\hat{r}}$.
\item
Suppose $\hat{l}$ is a leaf and $\hat{r}$ is an inner node of $T$. If $R_{\hat{r}}$ is not a resolving set for $V(G_{\hat{w}})$ in $G$, then again it is also sufficient to add one additional vertex $u$ of $G_{\hat{w}}$ to $R_{\hat{r}}$ to get a resolving set for $V(G_{\hat{w}})$ in $G$. Since $\hat{w}$ is labelled by $\rjoin$, we can use vertex $\vertex{\hat{l}}$ and the 2-vertex of $G_{\hat{r}}$.
\end{enumerate}

If a strongly connected graph $G$ has a 1-vertex $u_1$ (a 2-vertex $u_2$) w.r.t.\ a resolving set $R$, then, obviously, $G$ has no 1-vertex w.r.t.\ $R \cup \{u_1\}$ (no 2-vertex w.r.t.\ $R \cup \{u_2\}$, respectively). Suppose $G$ has both a 1-vertex $u_1$ and a 2-vertex $u_2$ w.r.t.\ a resolving set $R$. If there is an edge from $u_1$ to $u_2$ (no edge from $u_2$ to $u_1$), then $G$ neither has a 1-vertex nor a 2-vertex w.r.t.\ $R \cup \{u_1\}$ (w.r.t.\ $R \cup \{u_2\}$, respectively). In this case, we call vertex $u_1$ (vertex $u_2$, respectively) a {\em double remover}.

\begin{notation}
Let $u_1$ be a 1-vertex and $u_2$ be a 2-vertex of a strongly connected graph $G$ w.r.t.\ a vertex set $R$. A vertex $v$ is a {\em double remover} if $u_1$ is not a 1-vertex and $u_2$ is not a 2-vertex of $G$ w.r.t.\ $R \cup \{v\}$.
\label{notation2}
\end{notation}

\Cref{figure5} (1.), (2.), and (3.) show the necessary and forbidden edges for the case that a 1-vertex $u_1$ is a double remover, a 2-vertex $u_2$ is a double remover, and a 1-vertex $u_1$ and a 2-vertex $u_2$ are both double removers. An induced subgraph with all necessary edges and without all of the forbidden edges of the graph in the third case is not a directed co-graph.

%%%%%%%%%%%%%%%%%%%%%%%%%%%%%%%%%%%%%%%%%%%%%%%%%%%%%%%%%%%%
% Figure 5
%%%%%%%%%%%%%%%%%%%%%%%%%%%%%%%%%%%%%%%%%%%%%%%%%%%%%%%%%%%%

\begin{figure}[htb]
\center
\includegraphics[width=366pt]{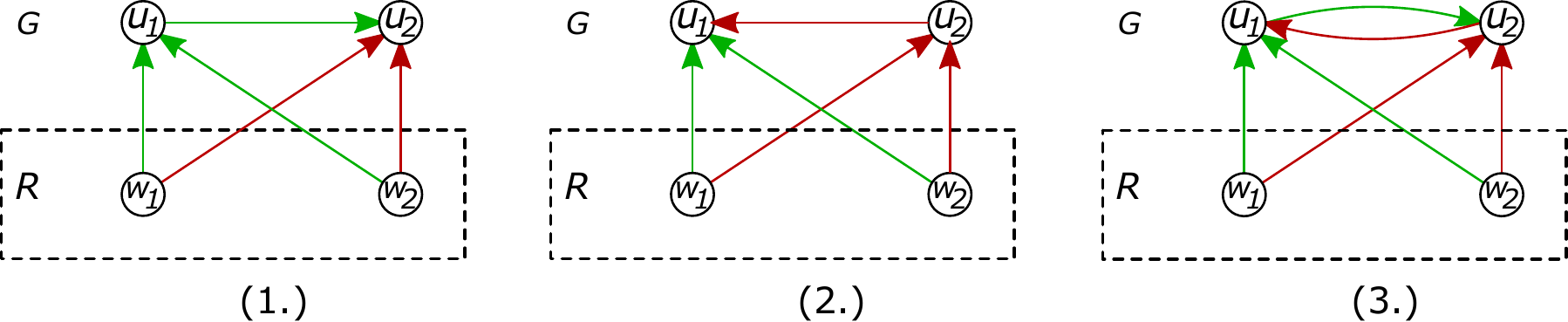} %183*2
\caption{Let $R$ be a resolving set for $V(G)$ in $G$, vertex $u_1$ be a 1-vertex w.r.t.\ $R$,  and vertex $u_2$ be a 2-vertex w.r.t.\ $R$.
In graph (1.), vertex $u_2$ is not a 2-vertex w.r.t.\  $R \cup \{u_1\}$.
In graph (2.), vertex $u_1$ is not a 1-vertex w.r.t.\  $R \cup \{u_2\}$.
In graph (3.), vertex $u_1$ is not a 1-vertex w.r.t.\  $R \cup \{u_2\}$ and vertex $u_2$ is not a 2-vertex w.r.t.\  $R \cup \{u_1\}$.
If these conditions are met, the green edges must be edges of $E(G)$ and the red edges must not be edges of $E(G)$.
There is no directed co-graph that has an induced subgraph with vertex set $\{u_1,u_2,w_1,w_2\}$ and all the necessary green edges but none of the forbidden red edges of graph (3.).}
\label{figure5}
\end{figure}

The next lemma shows that there is no vertex $v \in V(G_{\hat{w}}) \setminus \{u_1,u_2\}$ such that $G_{\hat{w}}$ neither has a 1-vertex nor a 2-vertex w.r.t.\ $R_{\hat{w}} \cup \{v\}$. That is, $u_1$ and $u_2$ are the only possible double removers. This shows that we only need to consider a 1-vertex or a 2-vertex to remove a 1-vertex or 2-vertex of $G_{\hat{w}}$ w.r.t.\ a resolving set $R_{\hat{w}}$, see \Cref{figure6}.

%%%%%%%%%%%%%%%%%%%%%%%%%%%%%%%%%%%%%%%%%%%%%%%%%%%%%%%%%%%%
% Figure 6
%%%%%%%%%%%%%%%%%%%%%%%%%%%%%%%%%%%%%%%%%%%%%%%%%%%%%%%%%%%%

\begin{figure}[htb]
\center
\includegraphics[width=110pt]{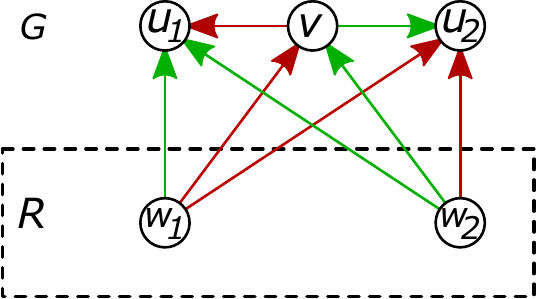} %55*2
\caption{Let $R$ be a resolving set for $V(G)$ in $G$, vertex $u_1$ be a 1-vertex w.r.t.\ $R$, vertex $u_2$ be a 2-vertex w.r.t.\ $R$, and vertex $v$ be a vertex such that $u_1$ is not a 1-vertex w.r.t.\ $R \cup \{v\}$ and $u_2$ is not a 2-vertex w.r.t.\ $R \cup \{v\}$.
Then the green edges must be edges of $E(G)$ and the red edges must not be edges of $E(G)$.
There is no directed co-graph that has an induced subgraph with vertex set $\{u_1,v,u_2,w_1,w_2\}$ and all the necessary green edges but none of the forbidden red edges.}
\label{figure6}
\end{figure}

%%%%%%%%%%%%%%%%%%%%%%%%%%%%%%%%%%%%%%%%%%%%%%%%%%%%%%%%%%%%
% Lemma 5
%%%%%%%%%%%%%%%%%%%%%%%%%%%%%%%%%%%%%%%%%%%%%%%%%%%%%%%%%%%%

\begin{lemma}
\label{lemma5}
Let $G$ be a strongly connected directed co-graph with co-tree $T$, $\hat{w} \in V(T)$ be an inner node of $T$, and $R_{\hat{w}} \subseteq V(G_{\hat{w}})$ be a resolving sets for $V(G_{\hat{w}})$ in $G$, such that $G_{\hat{w}}$ has a 1-vertex $u_1$ and a 2-vertex $u_2$ w.r.t.\ $R_{\hat{w}}$. Then there is no vertex $v \in V(G_{\hat{w}}) \setminus \{u_1,u_2\}$ such that $R_{\hat{w}} \cup \{v\}$ is a resolving set for $V(G_{\hat{w}})$ in $G$ and $G_{\hat{w}}$ has neither a 1-vertex nor a 2-vertex w.r.t.\ $R_{\hat{w}} \cup \{v\}$.
\end{lemma}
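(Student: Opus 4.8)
The plan is to assume, for contradiction, that such a vertex $v \in V(G_{\hat{w}}) \setminus \{u_1,u_2\}$ exists, and to reduce the statement to a concrete edge pattern on a five-vertex induced subgraph that cannot occur in a directed co-graph (the configuration depicted in \Cref{figure6}). First I would record that $v \notin R_{\hat{w}}$, since otherwise $R_{\hat{w}} \cup \{v\} = R_{\hat{w}}$ and nothing changes. Because the 1-vertex and the 2-vertex of $G_{\hat{w}}$ w.r.t.\ a resolving set are unique (two 1-vertices, or two 2-vertices, would be unresolved by $R_{\hat{w}}$), and because being a 1-vertex or 2-vertex w.r.t.\ $R_{\hat{w}} \cup \{v\}$ implies the same w.r.t.\ the subset $R_{\hat{w}}$, the requirement that $G_{\hat{w}}$ has neither a 1-vertex nor a 2-vertex w.r.t.\ $R_{\hat{w}} \cup \{v\}$ is equivalent (using $v \neq u_1,u_2$, so that $u_1$ and $u_2$ stay outside the enlarged set) to $u_1$ losing its 1-vertex status and $u_2$ losing its 2-vertex status, i.e.\ to $(v,u_1) \notin E(G)$ and $(v,u_2) \in E(G)$. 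This is exactly the double-remover condition of \Cref{notation2}.

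Next, since $v \notin R_{\hat{w}}$ and $v$ is neither the unique 1-vertex $u_1$ nor the unique 2-vertex $u_2$, the vertex $v$ is itself neither a 1-vertex nor a 2-vertex w.r.t.\ $R_{\hat{w}}$. Hence there exist two necessarily distinct vertices $w_1,w_2 \in R_{\hat{w}}$ with $(w_1,v) \in E(G)$ and $(w_2,v) \notin E(G)$; this is the source of the two resolving-set vertices appearing in \Cref{figure6}. Combined with the 1-vertex property of $u_1$ and the 2-vertex property of $u_2$, I would thus start from the forced data $(w_i,u_1)\in E(G)$ and $(w_i,u_2)\notin E(G)$ for $i \in \{1,2\}$, together with $(w_1,v)\in E(G)$, $(w_2,v)\notin E(G)$, $(v,u_1)\notin E(G)$ and $(v,u_2)\in E(G)$, all living on the induced subgraph $H := G[\{u_1,u_2,v,w_1,w_2\}]$.

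The two structural tools I would then invoke are: (a) every induced subgraph of a directed co-graph is a directed co-graph, and the underlying undirected graph of a directed co-graph is an undirected co-graph, hence contains no induced $P_4$ (each of $\union$, $\join$, $\rjoin$ induces an undirected union or join on the underlying graph); and (b) the co-tree of \Cref{definition2} is binary, so among any three vertices exactly one of them — the \emph{far} vertex, whose least common ancestor with each of the other two equals the common ancestor of all three — has identical out- and in-adjacency to the remaining two. Applying (b) successively to the triples $\{u_2,v,w_1\}$, $\{u_1,v,w_2\}$, $\{u_2,v,w_2\}$ and $\{u_1,v,w_1\}$ forces $v$ (resp.\ $w_2$, $w_1$) to be the far vertex in each, which pins down the edges incident to $v$ and a few more: $(u_2,v),(v,w_1)\in E(G)$, $(u_1,v),(v,w_2)\notin E(G)$, $(u_2,w_2)\notin E(G)$ and $(u_1,w_1)\in E(G)$. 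The only relationships left open on $H$ are then the adjacency inside $\{w_1,w_2\}$ (which (b) makes ``all or nothing'') and the possible edge $(u_2,w_1)$ (recall $(w_1,u_2)\notin E(G)$ already); the adjacency inside $\{u_1,u_2\}$ turns out to be irrelevant.

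Finally I would close with a short exhaustive case distinction on these last relationships, each case contradicting (a) or (b). If $w_1,w_2$ are non-adjacent, then $w_2,u_1,w_1,v$ induce a $P_4$ in the underlying graph. If $w_1,w_2$ are adjacent and $(u_2,w_1)\notin E(G)$, then $u_2,v,w_1,w_2$ induce a $P_4$. If $w_1,w_2$ are adjacent and $(u_2,w_1)\in E(G)$, then the triple $\{u_2,w_1,w_2\}$ has no far vertex, since each of its three vertices distinguishes the other two by an outgoing edge. In every case $H$ fails to be a directed co-graph, contradicting \Cref{definition2} and proving that no double remover $v$ exists. I expect the main obstacle to be organising this last stage: deriving all of the forced edges in the correct order and then checking that the handful of remaining configurations are genuinely non-realisable — which is precisely the claim asserted pictorially in \Cref{figure6}.
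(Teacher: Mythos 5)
Your proof is correct, and up to the final step it follows the same route as the paper: assume a double remover $v$ exists, observe $v \notin R_{\hat{w}}$, and use the resolving property of $R_{\hat{w}}$ together with the loss of the 1-/2-vertex status of $u_1,u_2$ to force the eight edge constraints on $\{u_1,u_2,v,w_1,w_2\}$ depicted in \Cref{figure6} (your derivation of $w_1,w_2$ from ``$v$ is neither a 1-vertex nor a 2-vertex w.r.t.\ $R_{\hat{w}}$'' is equivalent, up to swapping the names of $w_1$ and $w_2$, to the paper's derivation from the pairs $u_1,v$ and $u_2,v$ needing to be resolved). Where you genuinely diverge is the concluding non-realizability argument. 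The paper merely asserts that no partition of the five vertices into $U_1,U_2$ is compatible with $G_1\union G_2$, $G_1\join G_2$, or $G_1\rjoin G_2$ --- i.e., it appeals to the split induced at the least common ancestor in the co-tree and leaves the partition-by-partition check to the reader as ``easy to see.'' You instead invoke heredity of the co-graph property plus two structural facts --- the underlying undirected graph of a directed co-graph is $P_4$-free, and any triple of vertices contains a ``far'' vertex with uniform out- and in-adjacency to the other two --- to force six further (non-)edges via the triples $\{u_2,v,w_1\}$, $\{u_1,v,w_2\}$, $\{u_2,v,w_2\}$, $\{u_1,v,w_1\}$, and then close with three exhaustive cases (two induced $P_4$'s and one far-vertex-free triple). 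I checked all four triple applications and all three cases: they are correct and exhaustive, and your route has the merit of replacing the paper's bare assertion with an explicit, verifiable certificate of impossibility. Two cosmetic slips, neither of which damages the argument: the far-vertex principle should say \emph{at least one} (not ``exactly one'') vertex of a triple has uniform adjacency to the other two, since the adjacency property may hold coincidentally for a non-far vertex --- you only ever use the existential direction; and your inventory of remaining open pairs omits $(u_1,w_2)$, which, like the adjacency inside $\{u_1,u_2\}$, is indeed open but is never consulted in your three closing cases.
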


\begin{proof}
[Proof by contradiction]
Suppose there is a vertex $v \in V(G_{\hat{w}}) \setminus \{u_1,u_2\}$ such that $R_{\hat{w}} \cup \{v\}$ is a resolving set for $V(G_{\hat{w}})$ in $G$ and $G_{\hat{w}}$ neither has a 1-vertex nor a 2-vertex w.r.t.\ $R_{\hat{w}} \cup \{v\}$. Since $R_{\hat{w}}$ is a resolving set for $V(G_{\hat{w}})$ in $G$ the three vertex pairs $u_1,u_2$, $u_1,v$ and $u_2,v$ have to be resolved in $G$ by vertices of $R_{\hat{w}}$. Since each vertex of $R_{\hat{w}}$ resolves the vertex pair $u_1,u_2$ in $G$, set $R_{\hat{w}}$ has to contain at least two vertices $w_1,w_2$, where $w_1$ has no edge to $v$ and $w_2$ has an edge to $v$. That is, vertex $w_1$ resolves vertex pair $u_1,v$ in $G$ and vertex $w_2$ resolves vertex pair $u_2,v$ in $G$. Up to now, we have $(w_1,u_1),(w_2,u_1),(w_2,v) \in E(G_{\hat{w}})$ and $(w_1,u_2),(w_1,v),(w_2,u_2) \not\in E(G_{\hat{w}})$. Since $u_1$ is not a 1-vertex w.r.t.\ $R_{\hat{w}} \cup \{v\}$, we have $(v,u_1) \not\in E(G_{\hat{w}})$. Since $u_2$ is not a 2-vertex w.r.t.\ $R_{\hat{w}} \cup \{v\}$, we have $(v,u_2) \in E(G_{\hat{w}})$. \Cref{figure6} illustrates this situation.

Now it is easy to see that there is no partition of the vertex set $\{u_1,u_2,v,w_1,w_2\}$ into two non-empty sets $U_1$ and $U_2$ such that there are graphs $G_1$ and $G_2$ with vertex sets $U_1$ and $U_2$, respectively, and $G_1 \cup G_2$, $G_1 \join G_2$, or $G_1 \rjoin G_2$ contains the required edges and does not contain the forbidden edges.
\end{proof}

For each inner node $\hat{w}$ of $T$, the procedure {\bf Merge}$(\hat{w},\hat{l},\hat{r})$ computes at most two of the following 4 minimum resolving sets for $V(G_{\hat{w}})$ in $G$. The 4 sets are denoted by $R_{\hat{w},t_{\hat{w}}}$, where $t_{\hat{w}} \in \{0,1,2,12\}$ and
\begin{enumerate}
\item
$R_{\hat{w},0}$ is a minimum resolving set for $V(G_{\hat{w}})$ in $G$ such that $G_{\hat{w}}$ neither has a 1-vertex nor a 2-vertex w.r.t.\ $R_{\hat{w},0}$,
\item
$R_{\hat{w},1}$ is a minimum resolving set for $V(G_{\hat{w}})$ in $G$ such that $G_{\hat{w}}$ has a 1-vertex but no 2-vertex w.r.t.\ $R_{\hat{w},1}$,
\item
$R_{\hat{w},2}$ is a minimum resolving set for $V(G_{\hat{w}})$ in $G$ such that $G_{\hat{w}}$ has no 1-vertex but a 2-vertex w.r.t.\ $R_{\hat{w},2}$, and
\item
$R_{\hat{w},12}$ is a minimum resolving set for $V(G_{\hat{w}})$ in $G$ such that $G_{\hat{w}}$ has a 1-vertex and a 2-vertex w.r.t.\ $R_{\hat{w},12}$.
\end{enumerate}

Additionally, we store for each defined set $R_{\hat{w},1}$ and $R_{\hat{w},2}$, the 1-vertex and 2-vertex of $G_{\hat{w}}$ w.r.t.\ $R_{\hat{w},1}$ and $R_{\hat{w},2}$, respectively, as well as for each defined set $R_{\hat{w},12}$, the 1-vertex and 2-vertex of $G_{\hat{w}}$ w.r.t.\ $R_{\hat{w},12}$ and the information whether one of these two vertices, and if so, which one of them, is a double remover.

The minimum resolving sets for $V(G_{\hat{w}})$ in $G$ are computed from the previously computed minimum resolving sets for $V(G_{\hat{l}})$ and $V(G_{\hat{r}})$ in $G$. In the simplest case, a minimum resolving set for $V(G_{\hat{w}})$ in $G$ can be defined by the union of a minimum resolving set for $V(G_{\hat{l}})$ in $G$ and a minimum resolving set for $V(G_{\hat{r}})$ in $G$. In the worst case, at most one additional vertex from $V(G_{\hat{w}})$ has to be added. However, this additional vertex $v$ is always a 1-vertex or 2-vertex of $G_{\hat{l}}$ w.r.t.\ the corresponding minimum resolving set for $V(G_{\hat{l}})$ in $G$, or a 1-vertex or 2-vertex of $G_{\hat{r}}$ w.r.t.\ the corresponding minimum resolving set for $V(G_{\hat{r}})$ in $G$. Existing double removers are of course preferred here. If $\hat{l}$ or $\hat{r}$ or both are leaves, then vertex $\vertex{\hat{l}}$ and $\vertex{\hat{r}}$ are also a possible choice for $v$.

To explain how the procedure {\bf Merge}$(\hat{w},\hat{l},\hat{r})$ works in detail, we again distinguish between the three labels $\union$, $\join$ and $\rjoin$ for node $\hat{w}$ and the cases where the successor nodes of $\hat{w}$ are inner nodes or leaves. Let $R_{\hat{l},t_{\hat{l}}}$ be an already computed resolving set for $V(G_{\hat{l}})$ in $G$ and $R_{\hat{r},t_{\hat{r}}}$ be an already computed resolving set for $V(G_{\hat{r}})$ in $G$, where $t_{\hat{l}},t_{\hat{r}} \in \{0,1,2,12\}$. To update a resolving set $R_{\hat{w},t_{\hat{w}}}$ for some $t_{\hat{w}} \in \{0, 1, 2, 12\}$ by a minimum resolving set $R$ means that $R_{\hat{w},t_{\hat{w}}}$ is set to $R$, if $R_{\hat{w},t_{\hat{w}}}$ is not already defined, or the size of $R$ is less than the size of $R_{\hat{w},t_{\hat{w}}}$, if $R_{\hat{w},t_{\hat{w}}}$ is already defined.

Suppose $R_{\hat{w},0}$, $R_{\hat{w},1}$, $R_{\hat{w},2}$, and $R_{\hat{w},12}$ are defined. The following simplifications for updating the quantities result from the fact that we can prefer smaller quantities or quantities of the same size with fewer restrictions.
\begin{itemize}
\item If $|R_{\hat{w},0}| \leq |R_{\hat{w},1}|$ or $|R_{\hat{w},0}| \leq |R_{\hat{w},2}|$, then we do not need to update $R_{\hat{w},1}$ or $R_{\hat{w},2}$.
\item If $|R_{\hat{w},1}| < |R_{\hat{w},0}|$ or $|R_{\hat{w},2}| < |R_{\hat{w},0}|$, then we do not need to update $R_{\hat{w},0}$.
\item If $|R_{\hat{w},1}| \leq |R_{\hat{w},12}|$ or $|R_{\hat{w},2}| \leq |R_{\hat{w},12}|$, then we do not need to update $R_{\hat{w},12}$.
\item If $|R_{\hat{w},1}| < |R_{\hat{w},12}|$ or $|R_{\hat{w},2}| < |R_{\hat{w},12}|$, then we do not need to update $R_{\hat{w},1}$ or $R_{\hat{w},2}$.
\end{itemize}
That is, we either consider the resolving set $R_{\hat{w},0}$, one or both of the resolving sets $R_{\hat{w},1}$ and $R_{\hat{w},2}$, or the resolving set $R_{\hat{w},12}$ for $V(G_{\hat{w}})$ in $G$.

\begin{table}[hbt]
\setlength{\arraycolsep}{4pt}
\null \hfill
$
\begin{array}[t]{|cc|c|ccc|}
\hline
t_{\hat{l}} & t_{\hat{r}} & v & t_{\hat{w}}   & u_1 & u_2 \\
\hline
0   & 0   & - & 0   & - & - \\
\hline
0   & 1   & - & 0   & - & - \\
\hline
0   & 2   & - & 2   & - & u_{r,2} \\
\hline
0   & 12  & - & 2   & - & u_{r,2} \\
\hline
1   & 0   & - & 0   & - & - \\
\hline
1   & 1   & - & 0   & - & - \\
\hline
1   & 2   & - & 2   & - & u_{r,2} \\
\hline
1   & 12  & - & 2   & - & u_{r,2} \\
\hline
2   & 0   & - & 2   & - & u_{l,2} \\
\hline
2   & 1   & - & 2   & - & u_{l,2} \\
\hline
2   & 2   & u_{l,2} & 2 & - & u_{r,2} \\
\cline{3-6}
    &     & \cellcolor{blau} u_{r,2} & \cellcolor{blau} 2 & \cellcolor{blau} - & \cellcolor{blau} u_{l,2} \\
\hline
2   & 12  & u_{l,2} & 2 & - & u_{r,2} \\
\cline{3-6}
 & & \cellcolor{blau} u_{r,2} & \cellcolor{blau} 2 & \cellcolor{blau} - & \cellcolor{blau} u_{l,2} \\
\cline{3-6}
 & & \cellcolor{blau} u_{r,\star} & \cellcolor{blau} 2 & \cellcolor{blau} - & \cellcolor{blau} u_{l,2} \\
\hline
12  & 0   & - & 2   & - & u_{l,2} \\
\hline
12  & 1   & - & 2   & - & u_{l,2} \\
\hline
12  & 2   & u_{l,2} & 2 & - & u_{r,2} \\
\cline{3-6}
 & & \cellcolor{blau} u_{l,\star} & \cellcolor{blau} 2 & \cellcolor{blau} - & \cellcolor{blau} u_{r,2} \\
\cline{3-6}
 & & \cellcolor{blau} u_{r,2} & \cellcolor{blau} 2 & \cellcolor{blau} - & \cellcolor{blau} u_{l,2} \\
\hline
12 & 12 & u_{l,2} & 2 & - & u_{r,2} \\
\cline{3-6}
 & & \cellcolor{blau} u_{l,\star} & \cellcolor{blau} 2 & \cellcolor{blau} - & \cellcolor{blau} u_{r,2} \\
\cline{3-6}
 & & \cellcolor{blau} u_{r,2} & \cellcolor{blau} 2 & \cellcolor{blau} - & \cellcolor{blau} u_{l,2} \\
\cline{3-6}
 & & \cellcolor{blau} u_{r,\star} & \cellcolor{blau} 2 & \cellcolor{blau} - & \cellcolor{blau} u_{l,2} \\
\hline
\multicolumn{4}{c}{ } \\[-6pt]
\multicolumn{4}{c}{\text{Label }\union} \\
\end{array}
$
\hfill
$
\begin{array}[t]{|cc|c|ccc|}
\hline
t_{\hat{l}} & t_{\hat{r}} & v & t_{\hat{w}}   & u_1 & u_2 \\
\hline
0   & 0   & - & 0 & - & - \\
\hline
0   & 1   & - & 1 & u_{r,1} & - \\
\hline
0   & 2   & - & 0 & - & - \\
\hline
0   & 12  & - & 1 & u_{r,1} & - \\
\hline
1   & 0   & - & 1 & u_{l,1} & - \\
\hline
1   & 1   & u_{l,1} & 1 & u_{r,1} & - \\
\cline{3-6}
 & & \cellcolor{blau} u_{r,1} & \cellcolor{blau} 1 & \cellcolor{blau} u_{l,1} & \cellcolor{blau} - \\
\hline
1   & 2   & - & 1 & u_{l,1} & - \\
\hline
1 & 12 & u_{l,1} & 1 & u_{r,1} & - \\
\cline{3-6}
 & & \cellcolor{blau} u_{r,1} & \cellcolor{blau} 1 & \cellcolor{blau} u_{l,1} & \cellcolor{blau} - \\
\cline{3-6}
 & & \cellcolor{blau} u_{r,\star} & \cellcolor{blau} 1 & \cellcolor{blau} u_{l,1} & \cellcolor{blau} - \\
\hline
2   & 0   & - & 0   & - & - \\
\hline
2   & 1   & - & 1   & u_{r,1} & - \\
\hline
2   & 2   & - & 0   & - & - \\
\hline
2   & 12 & - & 1   & u_{r,1} & - \\
\hline
12 & 0 & - & 1 & u_{l,1} & - \\
\hline
12 & 1 & u_{l,1} & 1 & u_{r,1} & - \\
\cline{3-6}
 & & \cellcolor{blau} u_{l,\star} & \cellcolor{blau} 1 & \cellcolor{blau} u_{r,1} & \cellcolor{blau} - \\
\cline{3-6}
 & & \cellcolor{blau} u_{r,1} & \cellcolor{blau} 1 & \cellcolor{blau} u_{l,1} & \cellcolor{blau} - \\
\hline
12 & 2 & - & 1 & u_{l,1} & - \\
\hline
12 & 12 & u_{l,1} & 1 & u_{r,1} & - \\
\cline{3-6}
 & & \cellcolor{blau} u_{l,\star} & \cellcolor{blau} 1 & \cellcolor{blau} u_{r,1} & \cellcolor{blau} - \\
\cline{3-6}
 & & \cellcolor{blau} u_{r,1} & \cellcolor{blau} 1 & \cellcolor{blau} u_{l,1} &\cellcolor{blau}  -  \\
\cline{3-6}
 & & \cellcolor{blau} u_{r,\star} & \cellcolor{blau} 1 & \cellcolor{blau} u_{l,1} & \cellcolor{blau} -  \\
\hline
\multicolumn{6}{c}{ } \\[-6pt]
\multicolumn{6}{c}{\text{Label }\join} 
\end{array}
$
\hfill
$
\begin{array}[t]{|cc|c|ccc|}
\hline
t_{\hat{l}} & t_{\hat{r}} & v & t_{\hat{w}} & u_1 & u_2 \\
\hline
0 & 0 & - & 0 & - & - \\
\hline
0 & 1 & - & 1 & u_{r,1} & - \\
\hline
0 & 2 & - & 0 & - & - \\
\hline
0 & 12 & - & 1 & u_{r,1} & - \\
\hline
1 & 0 & - & 0 & - & - \\
\hline
1 & 1 & - & 1 & u_{r,1} & - \\
\hline
1 & 2 & u_{l,1} & 0 & - & - \\
\cline{3-6}
 & & \cellcolor{blau} u_{r,2} & \cellcolor{blau} 0 & \cellcolor{blau} - & \cellcolor{blau} - \\
\hline
1   & 12 & u_{l,1} & 1 & u_{r,1} & -  \\
\cline{3-6}
 & & \cellcolor{blau} u_{r,2} & \cellcolor{blau} 1 & \cellcolor{blau} u_{r,1} & \cellcolor{blau} - \\
\cline{3-6}
 & & u_{r,\star} & 0 & - & - \\
\hline
2 & 0 & - & 2 & - & u_{l,2} \\
\hline
2 & 1 & - & 12 & u_{r,1} & u_{l,2} \\
\hline
2 & 2 & - & 2 & - & u_{l,2} \\
\hline
2 & 12 & - & 12 & u_{r,1} & u_{l,2} \\
\hline
12 & 0 & - & 2 & - & u_{l,2} \\
\hline
12 & 1 & - & 12 & u_{r,1} & u_{l,2} \\
\hline
12 & 2 & u_{l,1} & 2 & - & u_{l,2} \\
\cline{3-6}
 & & u_{l,\star} & 0 & - & - \\
\cline{3-6}
 & & \cellcolor{blau} u_{r,2} & \cellcolor{blau} 2 & \cellcolor{blau} - & \cellcolor{blau} u_{l,2} \\
\hline
12 & 12 & u_{l,1} & 12 & u_{r,1} & u_{l,2} \\
\hline
 & & u_{l,\star} & 1 & u_{r,1} & - \\
\hline
 & & \cellcolor{blau} u_{r,2} & \cellcolor{blau} 12 & \cellcolor{blau} u_{r,1} & \cellcolor{blau} u_{l,2} \\
\hline
 & & u_{r,\star} & 2 & - & u_{l,2} \\
\hline
\multicolumn{6}{c}{ } \\[-6pt]
\multicolumn{6}{c}{\text{Label }\rjoin} 
\end{array}
$ \hfill \null
\caption{The tables for the case that $\hat{l}$ and $\hat{r}$ are inner nodes of $T$.}
\label{table1}
\end{table}

\begin{table}[hbt]
\setlength{\arraycolsep}{4pt}
\null \hfill
$
\begin{array}[t]{|cc|c|ccc|}
\hline
t_{\hat{l}} & t_{\hat{r}} & v & t_{\hat{w}} & u_1 & u_2 \\
\hline
0 & - & - & 2 & - & u_r \\
\hline
1 & - & - & 12  & u_{l,1} & u_r^{\star} \\
\hline
2 & - &  u_{l,2} &2 & - & u_{r} \\
\cline{3-6}
 & & \cellcolor{blau} u_r & \cellcolor{blau} 2 & \cellcolor{blau} - & \cellcolor{blau} u_{l,2} \\
\hline
12 & - & \cellcolor{blau} u_{l,2} & \cellcolor{blau} 12 & \cellcolor{blau} u_{l,1} & \cellcolor{blau} u_r^{\star} \\
\cline{3-6}
 & & \cellcolor{blau} u_{l,\star} & \cellcolor{blau} 2 & \cellcolor{blau} - & \cellcolor{blau} u_r \\
\cline{3-6}
 & & u_r & 2 & - & u_{l,2} \\
\hline
\multicolumn{6}{c}{ } \\[-6pt]
\multicolumn{6}{c}{\text{Label }\union \text{, } \hat{r} \text{ is a leaf}} \\
\end{array}
$
\hfill
$
\begin{array}[t]{|cc|c|ccc|}
\hline
t_{\hat{l}} & t_{\hat{r}} & v & t_{\hat{w}} & u_1 & u_2 \\
\hline
0 & - & - & 1 & u_r & - \\
\hline
1 & - & u_{l,1} & 1 & u_{r} & - \\
\cline{3-6}
 & & \cellcolor{blau} u_{r} & \cellcolor{blau} 1 & \cellcolor{blau} u_{l,1} & \cellcolor{blau} - \\
\hline
2 & - & - & 12 & u_r^{\star} & u_{l,2} \\
\hline
12 & - & \cellcolor{blau} u_{l,1}  & \cellcolor{blau} 12 & \cellcolor{blau} u_r^{\star} & \cellcolor{blau} u_{l,2} \\
\cline{3-6}
 & & \cellcolor{blau} u_{l,\star} & \cellcolor{blau} 1 & \cellcolor{blau} u_r & \cellcolor{blau} - \\
\cline{3-6}
 & & u_r & 1 & u_{l,1} & - \\
\hline
\multicolumn{6}{c}{ } \\[-6pt]
\multicolumn{6}{c}{\text{Label }\join \text{, } \hat{r} \text{ is a leaf}} \\
\end{array}
$ \hfill \null

\setlength{\arraycolsep}{4pt}
\null \hfill
$
\begin{array}[t]{|cc|c|ccc|}
\hline
t_{\hat{l}} & t_{\hat{r}} & v & t_{\hat{w}} & u_1 & u_2 \\
\hline
0 & - & - & 1 & u_r & - \\
\hline
1 & - & \cellcolor{blau} u_{l,1} &  \cellcolor{blau} 1 &  \cellcolor{blau} u_r &  \cellcolor{blau} - \\
\cline{3-6}
  & & u_r & 0 & - & - \\
\hline
2 & - & - & 12 & u_r & u_{l,2} \\
\hline
12  & - & \cellcolor{blau} u_{l,1} & \cellcolor{blau} 12 & \cellcolor{blau} u_r & \cellcolor{blau} u_{l,2} \\
\cline{3-6}
 & & u_{l,\star} & 1 & u_r & - \\
\cline{3-6}
 & & u_r & 2 & - & u_{l,2} \\
\hline
\multicolumn{6}{c}{ } \\[-6pt]
\multicolumn{6}{c}{\text{Label }\rjoin \text{, } \hat{r} \text{ is a leaf}} \\
\end{array}
$
\hfill
$
\begin{array}[t]{|cc|c|ccc|}
\hline
t_{\hat{l}} & t_{\hat{r}} & v & t_{\hat{w}} & u_1 & u_2 \\
\hline
- & 0 & - & 2 & - & u_l \\
\hline
- & 1 & - & 12 & u_{r,1} & u_l \\
\hline
- & 2 & u_l & 0 & - & -\\
\cline{3-6}
 & & \cellcolor{blau} u_{r,2} & \cellcolor{blau} 2 & \cellcolor{blau} - & \cellcolor{blau} u_l \\
\hline
- & 12 & u_l & 1 & u_{r,1} & - \\
\cline{3-6}
 & & \cellcolor{blau} u_{r,2} & \cellcolor{blau} 12 & \cellcolor{blau} u_{r,1} &  \cellcolor{blau} u_l \\
\cline{3-6}
 & & u_{r,\star} & 2 & - & u_l \\
\hline
\multicolumn{6}{c}{ } \\[-6pt]
\multicolumn{6}{c}{\text{Label }\rjoin\text{, }\hat{l}\text{ is a leaf}} \\
\end{array}
$
\hfill \null
\caption{The tables for the case that either $\hat{l}$ or $\hat{r}$ is a leaf of $T$.}
\label{table2}
\end{table}

\begin{table}[hbt]
\setlength{\arraycolsep}{4pt}
\null \hfill
$
\begin{array}[t]{|c|ccc|}
\hline
R_{\hat{w},t_{\hat{w}}} & t_{\hat{w}} & u_1 & u_2 \\
\hline
\{u_l\} & 2 & - & u_r \\
\hline
\rowcolor{blau} \{u_r\} & 2 & - & u_l \\
\hline
\multicolumn{4}{c}{ } \\[-6pt]
\multicolumn{4}{c}{\text{Label }\union \text{, }\hat{l} \text{ and } \hat{r} \text{ are leaves}} \\
\end{array}
$
\hfill
$
\begin{array}[t]{|c|ccc|}
\hline
R_{\hat{w},t_{\hat{w}}} & t_{\hat{w}} & u_1 & u_2 \\
\hline
\{u_l\} & 1 & u_r & - \\
\hline
\rowcolor{blau} \{u_r\} & 1 & u_l & - \\
\hline
\multicolumn{4}{c}{ } \\[-6pt]
\multicolumn{4}{c}{\text{Label }\join \text{, }\hat{l} \text{ and } \hat{r} \text{ are leaves}} \\
\end{array}
$
\hfill
$
\begin{array}[t]{|c|ccc|}
\hline
R_{\hat{w},t_{\hat{w}}} & t_{\hat{w}} & u_1 & u_2 \\
\hline
\{u_l\} & 1 & u_r & - \\
\hline
\{u_r\} & 2 & - & u_l \\
\hline
\multicolumn{4}{c}{ } \\[-6pt]
\multicolumn{4}{c}{\text{Label }\rjoin \text{, }\hat{l} \text{ and } \hat{r} \text{ are leaves}} \\
\end{array}
$
\hfill \null
\caption{The tables for the case that $\hat{l}$ and $\hat{r}$ are leaves of $T$.}
\label{table3}
\end{table}

Initially, all resolving sets $R_{\hat{w},t_{\hat{w}}}$, $\hat{w} \in V(T)$, $t_{\hat{w}} \in \{0,1,2,12\}$, are undefined. The rules for updating the sets are summarised in \Cref{table1,table2,table3}.
The rows $(t_{\hat{l}},t_{\hat{r}},v,t_{\hat{w}},u_1,u_2)$ of \Cref{table1,table2} show how the resolving sets $R_{\hat{w},t_{\hat{w}}}$ for $V(G_{\hat{w}})$ in $G$ are updated.
\begin{itemize}
\item
The types $t_{\hat{l}}$, $t_{\hat{r}}$, and $t_{\hat{w}}$ are the types of the resolving sets for $V(G_{\hat{l}})$, $V(G_{\hat{r}})$, and $V(G_{\hat{w}})$ in $G$, respectively. If $t_{\hat{l}} = "-"$ or $t_{\hat{r}} = "-"$, then $\hat{l}$ or $\hat{r}$ is a leaf of $T$, respectively.
\item
Vertex $v$ is the vertex that can be added to create a resolving set for $V(G_{\hat{w}})$ in $G$ from the union of a minimum resolving set for $V(G_{\hat{l}})$ in $G$ and a minimum resolving set for $V(G_{\hat{r}})$ in $G$. If $v = "-"$, then $R_{\hat{l},t_{\hat{l}}} \cup R_{\hat{r},t_{\hat{r}}}$ is already a resolving set for $V(G_{\hat{w}})$ in $G$. In this case, $R_{\hat{w},t_{\hat{w}}}$ is updated by $R_{\hat{l},t_{\hat{l}}} \cup R_{\hat{r},t_{\hat{r}}}$.
Otherwise, $R_{\hat{l},t_{\hat{l}}} \cup R_{\hat{r},t_{\hat{r}}} \cup \{v\}$ is a resolving set for $V(G_{\hat{w}})$ in $G$ and $R_{\hat{w},t_{\hat{w}}}$ is updated by $R_{\hat{l},t_{\hat{l}}} \cup R_{\hat{r},t_{\hat{r}}} \cup \{v\}$.
Vertex $v$ is either the 1-vertex, 2-vertex, or double remover of $V(G_{\hat{l}})$ w.r.t.\ $R_{\hat{l},t_{\hat{l}}}$ denoted by $u_{l,1}$, $u_{l,2}$, and $u_{l,\star}$, respectively, or the 1-vertex, 2-vertex, or double remover of $V(G_{\hat{r}})$ w.r.t.\ $R_{\hat{r},t_{\hat{r}}}$ denoted by $u_{r,1}$, $u_{r,2}$, and $u_{r,\star}$, respectively. If $\hat{l}$ or $\hat{r}$ is a leaf, then $u_l=\vertex{\hat{l}}$ and $u_r=\vertex{\hat{r}}$, respectively.
\item
Vertex $u_1$ and $u_2$ are the 1-vertex and 2-vertex of $G_{\hat{w}}$ w.r.t.\ the resulting resolving set for $V(G_{\hat{w}})$ in $G$. A created double remover gets a superscript asterisk. If $u_1 = "-"$ or $u_2 = "-"$, then $G_{\hat{w}}$ has no 1-vertex or 2-vertex, respectively, w.r.t.\ the resulting resolving set for $V(G_{\hat{w}})$ in $G$.
\end{itemize}
\Cref{table3} describes the minimum resolving sets for $V(G_{\hat{w}})$ in $G$ for the case that $\hat{l}$ and $\hat{r}$ are leaves in $T$.

The rows with a blue background do not need to be taken into account, because in these cases there is another selection that provides a solution that is at least as good as the blue one.

%%%%%%%%%%%%%%%%%%%%%%%%%%%%%%%%%%%%%%%%%%%%%%%%%%%%%%%%%%%%%%%%%%%%%%%%%%%%%%%%

The running time of computing a minimum resolving set for $G$ is therefore linear in the size of the given co-graph $G$.

%%%%%%%%%%%%%%%%%%%%%%%%%%%%%%%%%%%%%%%%%%%%%%%%%%%%%%%%%%%%%%%%%%%%%%%%%%%%%%%%
%
% Theorem 2
%
%%%%%%%%%%%%%%%%%%%%%%%%%%%%%%%%%%%%%%%%%%%%%%%%%%%%%%%%%%%%%%%%%%%%%%%%%%%%%%%%

\begin{theorem}
A minimum resolving set for a strongly connected directed co-graph $G$ is computable in linear time.
\label{theorem2}
\end{theorem}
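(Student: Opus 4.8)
The plan is to prove correctness of the two procedures \textbf{BottomUp} and \textbf{Merge} by induction over the co-tree $T$, and then argue that the whole computation runs in linear time once resolving sets are represented implicitly. The induction invariant I would maintain is the following: after \textbf{BottomUp}$(\hat{w})$ terminates, for every type $t \in \{0,1,2,12\}$ for which a type-$t$ resolving set exists at all, the stored set $R_{\hat{w},t}$ is a minimum-cardinality resolving set for $V(G_{\hat{w}})$ in $G$ subject to having exactly the 1-vertex/2-vertex configuration indicated by $t$; moreover the recorded auxiliary data (the 1-vertex $u_1$, the 2-vertex $u_2$, and whether one of them is a double remover in the sense of \Cref{notation2}) is correct. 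Since $V(G_{\hat{\rho}}) = V(G)$ at the root $\hat{\rho}$, a resolving set for $V(G_{\hat{\rho}})$ in $G$ is by definition a resolving set for $G$; and since, as noted in the text, any resolving set has at most one 1-vertex and at most one 2-vertex, every resolving set falls into exactly one of the four types. Hence, once the invariant holds at $\hat{\rho}$, the minimum resolving set for $G$ is the smallest defined set among $R_{\hat{\rho},0},R_{\hat{\rho},1},R_{\hat{\rho},2},R_{\hat{\rho},12}$, which establishes correctness.

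For the base case I would take an inner node both of whose children are leaves and verify the three entries of \Cref{table3} directly: resolving the single composed pair requires picking one of the two leaf vertices, and the resulting type and auxiliary data follow immediately from the label. For the inductive step I would assume the invariant for $\hat{l}$ and $\hat{r}$ and check \Cref{table1} (two inner children) and \Cref{table2} (one leaf child) line by line. Each line is justified by combining \Cref{lemma6}, which states precisely when the union $R_{\hat{l},t_{\hat{l}}} \cup R_{\hat{r},t_{\hat{r}}}$ is already resolving, with the three observations following it, which identify the single extra vertex $v$ (a 1-vertex, 2-vertex, or double remover of a child, or a leaf vertex $\vertex{\hat{l}}$ or $\vertex{\hat{r}}$) that repairs the union when it fails. \Cref{lemma5} is what guarantees completeness of this case distinction: it rules out any other vertex removing both a 1-vertex and a 2-vertex at once, so no cheaper repair is overlooked. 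By \Cref{lemma4}, minimality of the child sets transfers to $\hat{w}$, so each entry indeed yields a minimum among sets of the resulting type. The blue rows I would discard by exhibiting, for each, a non-blue row producing a set of the same resulting type and no larger size; together with the four stated size-comparison simplifications, this shows that keeping at most four sets per node loses nothing.

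The linear-time claim I would obtain by never materialising the sets during the bottom-up pass. For each node I store only the four cardinalities $|R_{\hat{w},t}|$, the $O(1)$ auxiliary vertices, and a back-pointer recording which child types and which optional vertex $v$ were chosen. With this representation each \textbf{Merge} inspects a constant number of table rows, performs a constant number of size comparisons, and determines every needed double-remover status in $O(1)$: when a type-$12$ set is created by merging, the relevant edge between its 1-vertex and 2-vertex either lies within a single child (in which case the status was already recorded) or between the two children (in which case it is fixed by the label $\union$, $\join$, or $\rjoin$ of $\hat{w}$). Hence the whole pass is $O(|V(T)|) = O(|V(G)|)$, and after it a single top-down walk following the back-pointers reconstructs the chosen minimum set in time proportional to its size, i.e.\ $O(|V(G)|)$. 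Adding the linear time for building the co-tree \cite{CP06} gives the claimed bound.

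I expect the main obstacle to be the inductive step: verifying that the exhaustive tables are both correct and complete across all three labels, all leaf/inner combinations, and all sixteen type pairs, and in particular confirming that the discarded blue rows are genuinely dominated. The supporting lemmas reduce every individual case to an $O(1)$ distance argument, but the number of cases is large, and the subtle point throughout is tracking the 1-vertex/2-vertex configuration precisely enough that the double-remover bookkeeping---on which the optimality of the type-$12$ sets rests---stays consistent as it propagates up the tree.
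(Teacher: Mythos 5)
Your proposal follows essentially the same route as the paper: a bottom-up dynamic program over the co-tree maintaining minimum resolving sets of the four types $\{0,1,2,12\}$, with correctness resting on \Cref{lemma4} (locality of resolution), \Cref{lemma6} (when unions suffice), \Cref{lemma5} (completeness of the 1-vertex/2-vertex/double-remover repairs), and case verification of \Cref{table1,table2,table3}. Your explicit treatment of the implicit representation (storing only cardinalities, auxiliary vertices, and back-pointers, then reconstructing the set top-down) is a welcome sharpening of the running-time claim that the paper merely asserts, but it does not change the approach.
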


\begin{example}
The table below shows the results computed by the {\bf BottomUp}() procedure for the co-graph $G$ defined by the co-tree $T$ below. There is a column for each possible minimum resolving set $R_{\hat{w},t_{\hat{w}}}$, $t_{\hat{w}} \in \{0,1,2,12\}$. The vertices $u_1,u_2$ next to the minimum resolving sets $R_{\hat{w},t_{\hat{w}}}$ are the 1-vertex and 2-vertices of $G_{\hat{w}}$ w.r.t.\ $R_{\hat{w},t_{\hat{w}}}$. The double remover is marked by an asterisk.

\medskip
\centerline{\includegraphics[width=190pt]{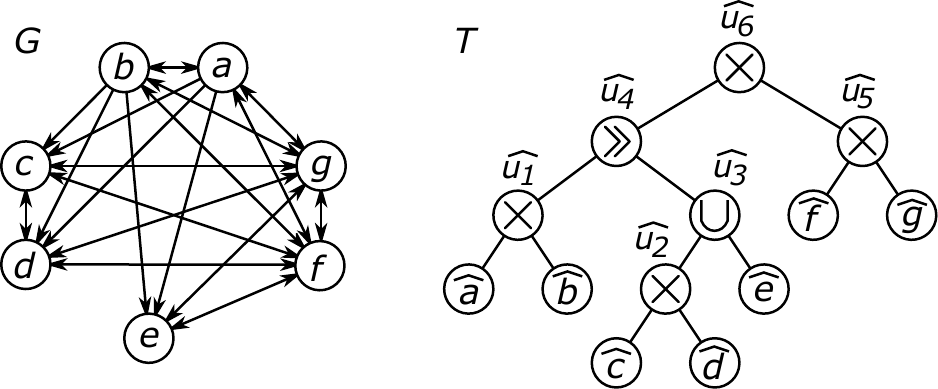}}%95*2
$$
\begin{array}{l|llll}
G_{\hat{w}} & R_{\hat{w},0} & R_{\hat{w},1}/u_1 & R_{\hat{w},2}/u_2 & R_{\hat{w},12}/u_1,u_2 \\
\hline

G_{\hat{u}_1}=G_{\hat{\capfont{a}}} \,\join G_{\hat{\capfont{b}}} &
- &
\{\capfont{a}\}/\capfont{b} &
- &
- \\

G_{\hat{u}_2}=G_{\hat{\capfont{c}}} \,\join G_{\hat{\capfont{d}}} &
- &
\{\capfont{c}\}/\capfont{d} &
- &
- \\

G_{\hat{u}_3}=G_{\hat{u}_2} \union \hat{\capfont{e}} &
- &
- &
- &
\{\capfont{c}\}/\capfont{d},\capfont{e}^{\star} \\

G_{\hat{u}_4}=G_{\hat{u}_1} \,\rjoin G_{\hat{u}_3} &
\{\capfont{a},\capfont{c},\capfont{e}\} &
- &
- &
- \\

G_{\hat{u}_5}=G_{\hat{\capfont{f}}} ~\join G_{\hat{\capfont{g}}} &
- &
- &
\{\capfont{f}\}/\capfont{g} &
- \\

G_{\hat{u}_6}=G_{\hat{u}_4} \join G_{\hat{u}_5} &
\{\capfont{a},\capfont{c},\capfont{e},\capfont{f}\} &
- &
- & 
- \\
\end{array}
$$

%$$
%\begin{array}{l|llll}
%G_{\hat{w}} & R_{\hat{w},0} & R_{\hat{w},1}/u_1 & R_{\hat{w},2}/u_2 & R_{\hat{w},12}/u_1,u_2 \\
%\hline

%G_{\hat{u_1}}=G_{\hat{\capfont{a}}} \,\join G_{\hat{\capfont{b}}} &
%- & \{\capfont{a}\}/\capfont{b} & - & - \\

%G_{\hat{u_2}}=G_{\hat{\capfont{c}}} \,\rjoin G_{\hat{\capfont{d}}} &
%- & \{\capfont{c}\}/\capfont{d} & \{\capfont{d}\}/\capfont{c} & - \\

%G_{\hat{u_3}}=G_{\hat{u_1}} \union G_{\hat{u_2}} &
%\{\capfont{a},\capfont{c}\} & - & - & - \\

%G_{\hat{u_4}}=G_{\hat{\capfont{e}}} \,\rjoin G_{\hat{\capfont{f}}} &
%- & \{\capfont{e}\}/\capfont{f} & \{\capfont{f}\}/\capfont{e} & - \\

%G_{\hat{u_5}}=G_{\hat{\capfont{g}}} \,\rjoin G_{\hat{\capfont{h}}} &
%- & \{\capfont{g}\}/\capfont{h} & \{\capfont{h}\}/\capfont{g} & - \\

%G_{\hat{u_6}}=G_{\hat{u}_4} \join G_{\hat{u}_5} &
%\{\capfont{f},\capfont{h}\} & - & - & - \\

%G_{\hat{u_7}}=G_{\hat{u}_6} \union G_{\hat{\capfont{i}}} &
%- & - & \{\capfont{f},\capfont{h}\}/\capfont{i} & - \\

%G_{\hat{u_8}}=G_{\hat{u}_3} \join G_{\hat{u}_7} & 
%\{\capfont{a},\capfont{c},\capfont{f},\capfont{h}\} & - & - & - \\

%G_{\hat{u_9}}=G_{\hat{\capfont{j}}} \,\join G_{\hat{\capfont{k}}} &
%- & \{\capfont{j}\}/\capfont{k} & - & - \\

%G_{\hat{u_{10}}}=G_{\hat{\capfont{l}}} ~\union G_{\hat{\capfont{m}}} &
%- & - & \{\capfont{l}\}/\capfont{m} & - \\

%G_{\hat{u_{11}}}=G_{\hat{u}_9} \union G_{\hat{u}_{10}} &
%- & - & \{\capfont{j},\capfont{l}\}/\capfont{m} & - \\

%G_{\hat{u_{12}}}=G_{\hat{u}_8} \join G_{\hat{u}_{11}} &
%\{\capfont{a},\capfont{c},\capfont{f},\capfont{h},\capfont{j},\capfont{l}\} & - & - & - \\
%\end{array}
%$$
Vertex set $\{\capfont{a},\capfont{c},\capfont{e},\capfont{f}\}$ is a minimum resolving set for the strongly connected directed co-graph $G$ defined by co-tree $T$ of the figure above.
\label{example1}
\end{example}

\section{Directed acyclic graphs}

An undirected graph can easily be transformed into a directed graph by replacing each undirected edge $\{u,v\}$ by two directed edges $(u,v)$ and $(v,u)$. Thus, {\sc Directed Metric Dimension} is NP-complete, because {\sc Metric Dimension} is NP-complete, see \cite{KRR96}, but it is also NP-complete for oriented graphs, see \cite{RRCM14}. However, the following theorem shows that {\sc Directed Metric Dimension} is also NP-complete for directed acyclic graphs, i.e., for DAGs.

\begin{theorem}
{\sc Directed Metric Dimension} is NP-complete for DAGs.
\label{theorem2}
\end{theorem}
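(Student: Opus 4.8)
The plan is to first dispatch membership in NP and then prove hardness by a polynomial reduction from an NP-complete \emph{separation} problem; I would use \textsc{Minimum Test Cover} (one could equally start from \textsc{Set Cover} or \textsc{3-Sat}). Membership is routine: given a candidate set $R$ with $|R|\le k$, run one breadth-first search from every $w\in R$ to obtain all distances $d_G(w,\cdot)$, and then test, for each pair $u,v$, whether some $w\in R$ resolves it according to \Cref{definition1}. This is polynomial, so \textsc{Directed Metric Dimension} lies in NP for arbitrary digraphs, in particular for DAGs.

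For hardness I would build, from a test-cover instance with items $u_1,\dots,u_n$ and tests $T_1,\dots,T_m\subseteq\{1,\dots,n\}$, a layered DAG in which the only finite distances are $1$ and $2$. The blocks, in topological order, are \emph{address} vertices $f_1,\dots,f_\ell$ with $\ell=\lceil\log_2 m\rceil+1$, two \emph{hub} vertices $z'$ and $z$, \emph{test} vertices $w_1,\dots,w_m$, and \emph{item} vertices $u_1,\dots,u_n$; every edge runs from an earlier block to a later one, so the graph is acyclic via the layering $f<z'<w<z<u$. I would insert $(z,u_i)$ for all $i$, $(z',w_j)$ and $(z',z)$, the edges $(w_j,z)$ for all $j$, the membership edges $(w_j,u_i)$ exactly when $i\in T_j$, and $(f_s,z),(f_s,z')$ together with $(f_s,w_j)$ according to a fixed injective binary code of $j$ that avoids the all-zero and all-one words. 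A direct distance computation then yields the two facts the reduction lives on: every test vertex reaches every item with $d_G(w_j,u_i)=1$ iff $i\in T_j$ and $=2$ otherwise, so a chosen $w_j$ resolves $u_i,u_{i'}$ exactly when $T_j$ contains precisely one of $i,i'$; and every auxiliary vertex of $\{f_1,\dots,f_\ell,z,z'\}$ reaches \emph{all} items at the \emph{same} distance ($2$ through the hub $z$, or $1$ from $z$ itself), so no auxiliary vertex resolves any item pair.

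The core of the argument is to show that a minimum resolving set decomposes as ``forced auxiliary part $+$ minimum test cover.'' I would fix $R_0=\{f_1,\dots,f_\ell,z'\}$ and check that $R_0$ resolves every pair that is \emph{not} a pair of items: the distinct codes make the $f_s$ separate all test--test pairs, the two reserved code words separate tests from items and from the hubs, and $z'\in R$ settles every pair containing a hub. Conversely, since the auxiliary vertices assign all items one common distance vector, the item pairs must be resolved entirely by chosen test vertices, which forces $\{T_j : w_j\in R\}$ to be a test cover; and one verifies that a set of size essentially $|R_0|$ is unavoidable for the administrative pairs. Hence $G$ admits a resolving set of size at most $k+c(m)$, for a fixed $c(m)=O(\log m)$, iff the instance has a test cover of size at most $k$.

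The main obstacle is precisely the interference I have been steering around, and it has two faces. In a DAG a source can be resolved only by itself (it has no ancestor) and a sink only as an endpoint of the pair, so the selector vertices cannot be left to resolve one another; one must supply ancestors (the hubs and address vertices) to resolve them cheaply \emph{while preventing those ancestors from resolving the item pairs} --- exactly what the common hub $z$ enforces by equidistancing all items. The second, more delicate point is the \emph{shortcut} of placing the item vertices themselves into $R$, since each such vertex trivially resolves every pair containing it. To rule this out I would pad the source instances so that any mixture of included items and tests exceeds the budget $k+c(m)$ --- taking robust instances whose test-cover threshold stays below any affordable number of directly-included items --- and then argue that no such mixture beats a genuine test cover, using that an item vertex (a sink) resolves no item pair other than its own. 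Proving this budget-tightness, together with the routine but lengthy distance bookkeeping and the acyclicity check, is where the real work lies.
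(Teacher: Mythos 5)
Your construction itself is sound, and it is genuinely different from the paper's: the paper reduces from \textsc{Hitting Set}, you reduce from \textsc{Minimum Test Cover}; the paper forces three source vertices and separates the administrative pairs by \emph{long} distances along paths and layers, while you force $O(\log m)$ address sources with binary codes and work entirely in a distance-$\le 2$ regime. I checked your distance claims and they hold (the all-zero exclusion separates tests from items, the all-one exclusion separates tests from the hubs, and the pair $z,z'$ forces one hub into $R$). But the step you yourself flag as open --- ruling out the ``shortcut'' of putting item vertices into $R$ --- is a genuine gap, and the fix you sketch is the wrong one. Padding cannot work as described: a resolving set consisting of the forced part, $t$ item vertices, and a partial test cover of the remaining $n-t$ items is always available, and no choice of ``robust'' instance makes every such mixture exceed the budget while keeping a genuine test cover affordable, because a mixture is never more expensive than the test cover it shadows. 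What you actually need is an \emph{exchange} argument, and it is available: if $R$ contains items $D$ and tests $W$, then $W$ pairwise separates $X\setminus D$; re-insert the items of $D$ one at a time, noting that each re-inserted item collides with at most one existing signature class (adding tests only refines signatures), and repair each collision with a single separating test. This yields a full test cover of size $\le |W|+|D|$, so mixtures never beat test covers. Note the hidden hypothesis: the repairing test exists only if \emph{every} pair of items is separated by some test in the instance. On infeasible instances your equivalence is false outright (the DAG still has a small resolving set via included items, while no test cover exists), so you must restrict to feasible instances --- harmless, since feasibility is polynomial-time checkable and the standard NP-hardness proofs for \textsc{Test Cover} produce feasible instances, but it has to be said.

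It is instructive to see why the paper never meets this difficulty. Its gadget gives each set $C_j$ a \emph{private} pair $u_{C_j},u_{C'_j}$ that only $u_{C_j}$, $u_{C'_j}$, or a vertex $u_{x_i}$ with $x_i\in C_j$ can resolve, and every other pair involving $u_{C_j}$ or $u_{C'_j}$ is already handled by the three forced sources. Hence the shortcut analysis collapses to a trivial one-for-one swap: replace any $u_{C_j}$ or $u_{C'_j}$ in $R$ by some $u_{x_i}$ with $x_i\in C_j$, preserving both the resolving property and the cardinality. Your all-pairs separation requirement on the item block is exactly what makes the exchange nontrivial (one included item can service $n-1$ pairs at once), which is why the missing lemma is the crux of your proof rather than a routine bookkeeping step. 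With the exchange lemma and the feasibility proviso added, your reduction goes through and is a legitimate alternative to the paper's; without them, the claimed equivalence ``resolving set of size $\le k+c(m)$ iff test cover of size $\le k$'' is unproved in the direction that matters for hardness.
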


\begin{proof}
The problem is obviously in NP. The NP-hardness is shown by a polynomial time reduction from {\sc Hitting Set}. Let $\mathcal{C}=\{C_1,\ldots,C_m\}$ be a set of subsets of a set $X=\{x_1,\ldots,x_n\}$. We define a directed acyclic graph $G$ such that $G$ has a resolving set of size at most $3+k$ if and only if there is a subset $X' \subset X$ of size at most $k$ such that $X' \cap C_j \not= \emptyset$ for $j=1,\ldots,m$. W.l.o.g., we assume that $m > n$, otherwise duplicate some sets of $\mathcal{C}$, and that for each subset $C_j$ there is at least one $x_i$ such that $x_i \not\in C_j$.

The graph $G$ defined for $X,\mathcal{C}$ has
\begin{enumerate}
\item three vertices $u_a$, $u_b$ and $u_c$,
\item $n$ vertices $u_{x_i}$ for $i=1,\ldots,n$,
\item $2m$ vertices $u_{C_j},u_{C'_j}$ for $j=1,\ldots,m$,
\end{enumerate}
and
\begin{enumerate}
\item $n$ edges $(u_a,u_{x_i})$ for $i=1,\ldots,n$,
\item one edge $(u_b,u_{x_1})$,
\item $n-1$ edges $(u_{x_i},u_{x_{i+1}})$ for $i=1,\ldots,n-1$,
\item two edges $(u_c,u_{C_1}),(u_c,u_{C'_1})$,
\item $m$ edges $(u_{C_j},u_{C'_j})$, for $j=1,\ldots,m$,
\item $4(m-1)$ edges $(u_{C_j},u_{C_{j+1}}),(u_{C_j},u_{C'_{j+1}}),(u_{C'_j},u_{C_{j+1}}),(u_{C'_j},u_{C'_{j+1}})$, for $j=1,\ldots,m-1$,
\item $nm$ edges $(u_{x_i},u_{C_j})$ for $i=1,\ldots,n$ for $j=1,\ldots,m$, and
\item an edge $(u_{x_i},u_{C'_j})$, $1 \leq i \leq n$, $1 \leq j \leq m$, if and only if $x_i \not\in C_j$. 
\end{enumerate}
\Cref{figure7} shows an example.

%%%%%%%%%%%%%%%%%%%%%%%%%%%%%%%%%%%%%%%%%%%%%%%%%%%%%%%%%%%%
% Figure 7
%%%%%%%%%%%%%%%%%%%%%%%%%%%%%%%%%%%%%%%%%%%%%%%%%%%%%%%%%%%%

\begin{figure}[htb]
\center
\includegraphics[width=264pt]{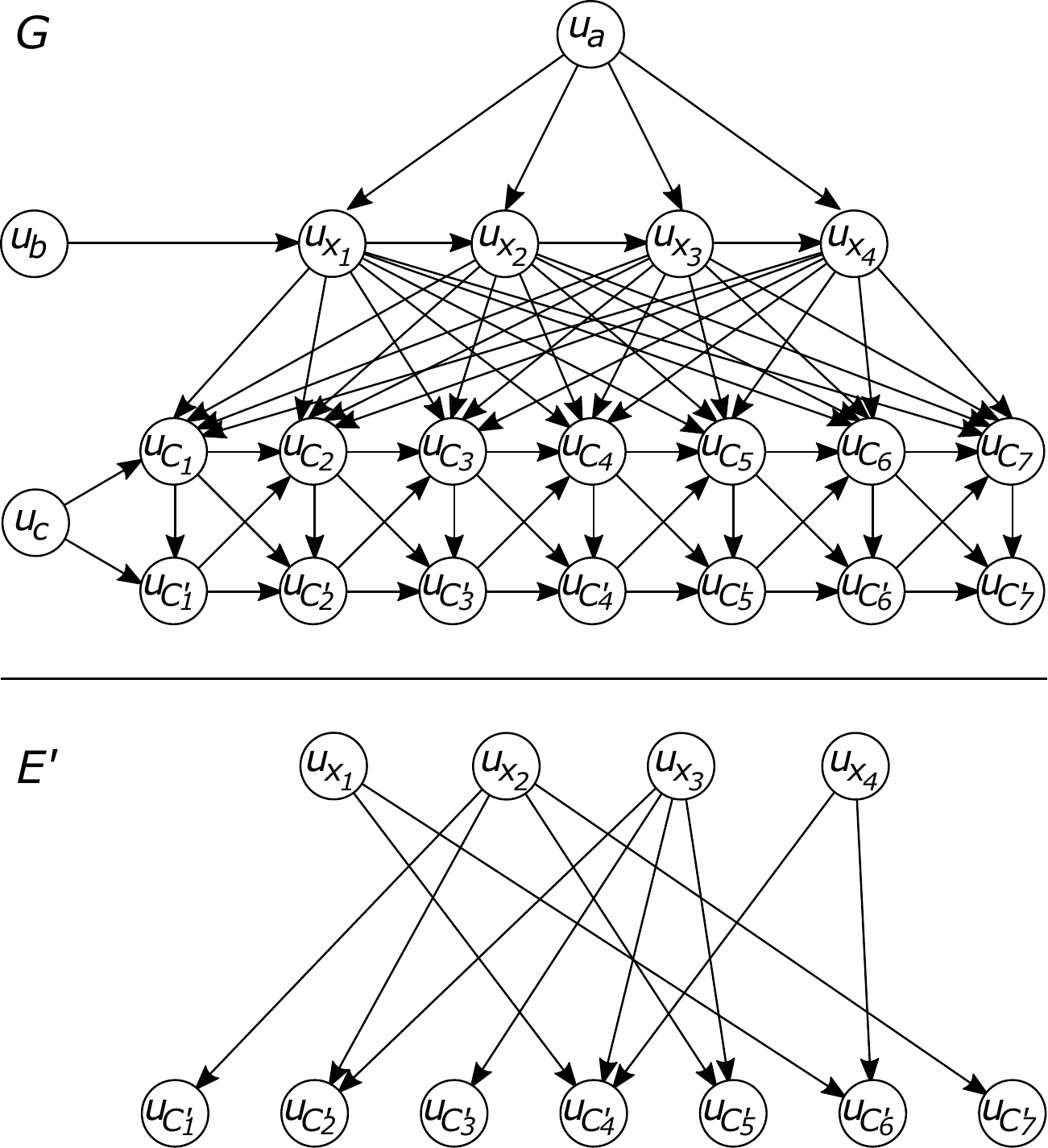} %132*2
\caption{Graph $G$ obtained by instance $X,\mathcal{C}$ with $X=\{x_1,x_2,x_3,x_4\}$, $\mathcal{C}=\{C_1, C_2, C_3, C_4, C_5, C_6, C_7\}$, $C_1=\{x_1,x_3,x_4\}$, $C_2=\{x_1,x_4\}$, $C_3=\{x_1,x_2,x_4\}$, $C_4=\{x_2\}$, $C_5=\{x_1,x_4\}$, $C_6=\{x_2,x_3\}$, $C_7=\{x_1,x_3,x_4\}$ for {\sc Hitting Set} in the proof of \Cref{theorem2}. The edges which result from the membership of the elements $x_i$ to the subset $C_j$ are drawn separately below for reasons of clarity. These are the edges in edge set $E'$. Set $X'=\{x_2,x_4\}$ is a minimum {\sc Hitting set} for $X,\mathcal{C}$, where $\{u_a,u_b,u_c,u_{x_2},u_{x_4}\}$ is a minimum resolving set for $G$.}
\label{figure7}
\end{figure}

Each minimum resolving set for $G$ contains the three vertices $u_a,u_b,u_c$, because these vertices have no incoming edges. Vertex $u_b$ resolves every vertex pair $u_{x_i},u_{x_j}$, $1 \leq i < j \leq n$, because $d_G(u_b,u_{x_i}) = i$. Vertex $u_c$ resolves every vertex pair $(u_{C_i},u_{C_j})$, $(u_{C_i},u_{C'_j})$, $(u_{C'_i},u_{C_j})$ and $(u_{C'_i},u_{C'_j})$, $1 \leq i < j \leq m$, because $d_G(u_c,u_{C_i}) = d_G(u_c,u_{C'_i}) = i$. Vertex $u_a$ resolves every vertex pair $u_{x_i}, u_{C_j}$ and every vertex pair $u_{x_i}, u_{C'_j}$, $1 \leq i \leq n$, $1 \leq j \leq m$, because $d_G(u_a,u_{x_i}) = 1$ and $d_G(u_a,u_{C_j})=2$. Note that we assume that for each subset $C_j$ there is at least one $x_i$ such that $x_i \not\in C_j$, therefore $d_G(u_a,u_{C'_j})=2$. The only vertex pairs that have not been considered yet are the vertex pairs $u_{C_j},u_{C'_j}$ for $j=1,\ldots,m$. A vertex pair $u_{C_j},u_{C'_j}$ can only be resolved by vertex $u_{C_j}$, by vertex $u_{C'_j}$ or by a vertex $u_{x_i}$ such that $x_i \in C_j$, because $x_i \in C_j$ if and only if $d_G(u_{x_i},u_{C'_j})=2$.

Let $X' \subseteq X$ be a hitting set for $\mathcal{C}$ of size at most $k$, i.e., $X' \cap C_j \not= \emptyset$ for $j=1,\ldots,m$. Then the vertices $u_{x_i}$ with $x_i \in X'$ resolve all vertex pairs $u_{C_j}$, $u_{C'_j}$, $1 \leq j \leq m$. Thus $$R=\{u_a,u_b,u_c\} \cup \bigcup_{x_i \in X'} u_{x_i}$$ is a resolving set for $G$ of size at most $3+k$. Let $R$ be a resolving set for $G$ of size at most $3+k$. If $R$ contains a vertex $u_{C_j} \in R$ or $u_{C'_j} \in R$, then replace it by a vertex $u_{x_i}$ where $x_i \in C_j$. The resulting set $R'$ is also a resolving set for $G$ of size at most $3+k$ and $$X'=\{ x_i | u_{x_i} \in R'\}$$ is a hitting set for $\mathcal{C}$. 
\end{proof}

%%%%%%%%%%%%%%%%%%%%%%%%%%%%%%%%%%%%%%%%%%%%%%%%%%%%%%%%%%%%%%%%%%%%%%%%%%%%%%%%
%%%%%%%%%%%%%%%%%%%%%%%%%%%%%%%%%%%%%%%%%%%%%%%%%%%%%%%%%%%%%%%%%%%%%%%%%%%%%%%%
%%%%%%%%%%%%%%%%%%%%%%%%%%%%%%%%%%%%%%%%%%%%%%%%%%%%%%%%%%%%%%%%%%%%%%%%%%%%%%%%

\section{Conclusions}

In this paper we have shown that {\sc Directed Metric Dimension} is decidable in linear time for directed co-graphs and we also presented an algorithm to compute minimum resolving sets for directed co-graphs in linear time. Additionally, we have shown that {\sc Directed Metric Dimension} is NP-complete for DAGs, extending the existing results for general directed and oriented graphs.

The metric dimension as well as its variants have rarely been studied for directed graphs \cite{SW21}. Developing efficient algorithms to compute the metric dimension for specific graph classes is one of the most interesting challenges to us.

%%%%%%%%%%%%%%%%%%%%%%%%%%%%%%%%%%%%%%%%%%%%%%%%%%%%%%%%%%%%%%%%%%%%%%%%%%
\newcommand{\etalchar}[1]{$^{#1}$}

%%%%%%%%%%%%%%%%%%%%%%%%%%%%%%%%%%%%%%%%%%%%%%%%%%%%%%%%%%%%%%%%%%%%%%%%%%

%%%%%%%%%%%%%%%%%%%%%%%%%%%%%%%%%%%%%%%%%%%%%%%%%%%%%%%%%%%%%%%%%%%%%%%%%%
\end{document}